\pdfoutput=1 

\documentclass[final]{IEEEtran}
\IEEEoverridecommandlockouts
\usepackage{cite}
\usepackage{amsmath,amssymb,amsfonts,amsthm}
\usepackage{algorithmic}
\usepackage{graphicx}
\usepackage{textcomp}
\usepackage{xcolor}
\usepackage{color}
\usepackage{algorithmic}
\usepackage{url}
\usepackage{lipsum}
\usepackage{multirow}
\usepackage[normalem]{ulem}
\usepackage{afterpage}
\usepackage{bbm}
\usepackage{dsfont}
\usepackage[ruled,lined,boxed]{algorithm2e}
\usepackage{mathtools, nccmath}
\usepackage{lipsum}
\usepackage{bm}
\usepackage{threeparttable}
\usepackage[colorlinks=true,citecolor=black,linkcolor=black]{hyperref}
\usepackage{threeparttable}  
\usepackage{makecell}        
\usepackage{comment}
\usepackage{float}
\usepackage{dblfloatfix}
\usepackage{placeins}
\usepackage{booktabs}   
\usepackage{array}
\ifCLASSOPTIONcompsoc
\usepackage[caption=false, font=normalsize, labelfont=sf, textfont=sf]{subfig}
\else
\usepackage[caption=false, font=footnotesize]{subfig}
\fi

\DeclareMathAlphabet{\mathbcal}{OMS}{cmsy}{b}{n}
\def\BibTeX{{\rm B\kern-.05em{\sc i\kern-.025em b}\kern-.08em
		T\kern-.1667em\lower.7ex\hbox{E}\kern-.125emX}}

\newtheorem{lemma}{Lemma}
\newtheorem{proposition}{Proposition}
\newtheorem{remark}{Remark}

\newtheoremstyle{iremark}
{\topsep}   
{\topsep}  
{\upshape} 
{0pt}       
{\itshape} 
{.}         
{5pt plus 1pt minus 1pt} 
{\thmname{#1}\thmnumber{ \itshape#2}\thmnote{ (#3)}} 
\theoremstyle{iremark}

\def\blue{}

\mathchardef\mhyphen="2D
\makeatletter 
\let\myorg@bibitem\bibitem
\def\bibitem#1#2\par{%
	\@ifundefined{bibitem@#1}{%
		\myorg@bibitem{#1}#2\par
	}{%
		\begingroup
		\color{\csname bibitem@#1\endcsname}%
		\myorg@bibitem{#1}#2\par
		\endgroup
	}%
}

\makeatother

\expandafter\def\csname bibitem@8426033\endcsname{black}          
\expandafter\def\csname bibitem@pan2020multicell\endcsname{black} 
\expandafter\def\csname bibitem@li2023ergodic\endcsname{black}    
\expandafter\def\csname bibitem@abuzgaia2026fas\endcsname{black}  
\expandafter\def\csname bibitem@yu2004iterative\endcsname{black}  
\expandafter\def\csname bibitem@wang2026submodular\endcsname{black} 

\begin{document}

\title{Efficient Discrete Position Design for Movable Antenna Systems: Low Complexity and Robustness}
\author{
Haonan~Wang,~\IEEEmembership{Graduate~Student~Member,~IEEE},~Xianghao~Yu,~\IEEEmembership{Senior~Member,~IEEE}, \\Rui Wang,~Ang~Li,~\IEEEmembership{Senior~Member,~IEEE},~and~Ying-Jun~Angela~Zhang,~\IEEEmembership{Fellow,~IEEE}
\thanks{Haonan Wang and Xianghao Yu are with the Department of Electrical Engineering, City University of Hong Kong, Hong Kong (e-mail:
	haonwang2-c@my.cityu.edu.hk; alex.yu@cityu.edu.hk).}
\thanks{Rui Wang is with Microsoft Research Asia, Beijing 100080, China (e-mail: wrui0920@gmail.com).}
\thanks{Ang Li is with Shaanxi Key Laboratory of Deep Space Exploration and Intelligent Information Technology, the School of Information and Communications Engineering, Faculty of Electronic and Information Engineering, Xi’an Jiaotong University, Xi’an, Shaanxi 710049, China, and is also with the National Mobile Communications Research Laboratory, Southeast University, Nanjing, China (e-mail: ang.li.2020@xjtu.edu.cn).}
\thanks{Ying-Jun~Angela~Zhang is with the Department of Information Engineering, The Chinese University of Hong Kong, Hong Kong, China (e-mail: yjzhang@ie.cuhk.edu.hk).}
\thanks{Xianghao Yu is the corresponding author.}
\thanks{{\blue This paper was presented in part at the IEEE International Conference on Communications (ICC), Glasgow, UK, May 2026 \cite{wang2026submodular}.}}
}
\maketitle

\begin{abstract}
Building on advances in reconfigurable antenna techniques, movable antennas (MAs) can dynamically reshape antenna arrays and introduce additional spatial degrees of freedom (DoFs), thereby further improving communication performance. Despite these benefits, existing MA design algorithms often entail prohibitively high computational complexity from discrete positioning selection, which prevents practical implementations of MAs. In this paper, we investigate efficient solutions for the mutual information (MI) maximization problem of a multi-user multiple-input multiple-output (MU-MIMO) uplink communication system aided by discrete MAs. To this end, we first formulate the discrete MA positioning problem with the assumption of perfect channel state information (CSI). Then, we prove that the design problem falls into the category of monotone submodular maximization subject to a 2-system constraint. Accordingly, we propose a low-complexity distance-constrained submodular position search algorithm, which is theoretically shown to achieve at least  $\textcolor{black}{1/3}$ of the optimum. Furthermore, we extend our approach to scenarios with imperfect CSI, and show that the proposed submodular optimization-based design remains robust against channel estimation errors. Numerical results demonstrate that the proposed scheme can achieve at least 90\% of the optimal solution's MI gain under both perfect and imperfect CSI assumptions. Remarkably, the algorithm achieves orders-of-magnitude complexity reduction (e.g., $34.4\times$ faster than the branch-and-bound approach) while maintaining significant MI gains.
\end{abstract}

\begin{IEEEkeywords}
Fluid antenna, greedy search, $k$-system, MIMO, movable antenna, mutual information, position optimization, robust design, submodular.
\end{IEEEkeywords}

\IEEEpeerreviewmaketitle

\section{Introduction}
In the past few decades, antenna technology has profoundly catalyzed the development of wireless communication systems \cite{6798744}, paving the way for advanced multi-antenna techniques such as multiple-input multiple-output (MIMO) and its large-scale evolution, massive MIMO \cite{7400949}. The recent push towards even larger apertures and higher frequencies in technologies like extremely large-scale MIMO (XL-MIMO) is further driving the paradigm shift to near-field communications \cite{10845870}, where spherical wavefront propagation unlocks new spatial degrees of freedom in the range domain. While these innovations have significantly enhanced network capacity and reliability, conventional antenna deployments typically remain fixed in both position and geometry, limiting their adaptability to dynamic propagation environments. Recent advances in mechanically reconfigurable and liquid-based antennas, however, have introduced a new paradigm known as position reconfigurable antennas (PRAs) \cite{wang2025electromagnetically}. By dynamically adjusting antenna arrays, PRAs can actively {revamp} the wireless channel according to different communication requirements, endowing it with {desirable} characteristics and thereby further boosting communication performance in scenarios where fixed deployments fall short {\cite{10906511}}. 

From the perspective of practical implementation, PRAs can be realized in three principal forms: fluid antennas (FAs), metasurface-based antennas, and movable antennas (MAs). Specifically, FAs exploit shape-adaptive liquid radiators \cite{9982508}, while liquid metal designs have already verified the basic feasibility of PRAs \cite{shen2024design}. Meanwhile, metasurface-based solutions seek to emulate positional variation by selectively exciting pixels on coding metasurfaces \cite{wang2024multichannel,wang2023manipulations}, employing fast-switching positive-intrinsic-negative (PIN) diodes \cite{7086418}, or integrating substrate-integrated waveguides (SIWs) \cite{liu2025water}. Furthermore, MAs reposition the radiating element via direct mechanical actuation, as exemplified by the mechanically movable prototype reported in \cite{shao20246dma}. 
To strike a balance between performance gains and practical feasibility, and to further elucidate the physical mechanisms by which PRAs influence the wireless channel, MAs have garnered increasing attention in the field recently.

\subsection{\textcolor{black}{Related Works}}
MAs have been introduced into communication systems to further enhance network performance, and they can be broadly categorized into continuous and discrete {MAs}. A continuously adjustable MA represents the most general and ideal type of MA, assuming that the MA{s}’ position{s} can be selected from a finite continuous domain. In recent years, extensive research has been conducted on antenna position optimization for continuous MA systems in various classical communication scenarios, including MIMO (with single or multiple users) \cite{10243545}, near-field communications \cite{liu2025nearfieldcommunicationmassivemovable}, integrated sensing and communications (ISAC) \cite{10707252}, etc. From an optimization perspective, the continuous MA problem is highly non-convex and nonlinear. Consequently, the mainstream methods for position design primarily consist of gradient-based ascent/descent algorithms \cite{10243545}, particle swarm optimization (PSO) \cite{10620306}, and artificial intelligence-based (AI-based) approaches \cite{10534854}. Beyond these classical scenarios, continuously adjustable MAs have also been applied in emerging communication contexts such as space-air-ground integrated networks and over-the-air computation (AirComp). Specifically, an MA array was exploited in \cite{10806489} to minimize the average signal leakage power directed to interference areas, while maintaining a minimum beamforming gain over coverage areas, thereby addressing the undesirable sidelobes and interference leakage associated with traditional fixed antenna arrays. In addition, authors of \cite{10474119} introduced an MA array into AirComp systems to flexibly reshape the wireless channels between Internet of Things (IoT) devices and the sink node, further reducing the computation mean square error (CMSE) compared to conventional fixed-antenna-based AirComp systems.

While continuous MAs typically rely on mechanical approaches, the associated movement speed {and precision} constraints often preclude channel-level reconfiguration, thereby limiting their practical viability \cite{10753482}. In contrast, discrete MAs leverage fast-switching high-resolution array, enabling more rapid reconfiguration and making them more appealing for real-world applications. Moreover, as the number of spatial sampling points increases, discrete MAs can approximate {ideal} continuous MAs, which explains their growing adoption in recent research. Specifically, authors of \cite{10437926} investigated a downlink multi-user multiple-input single-output (MISO) broadcast system by formulating a transmit power minimization problem that jointly designs beamforming and discrete MA placement, deriving a globally optimal solution via a branch-and-bound (BnB) approach. Meanwhile, a single-user downlink MISO system was considered in \cite{10508218} and a graph optimization-based discrete MA design was further proposed to maximize the receive signal-to-noise ratio (SNR). Beyond these conventional scenarios, several studies have explored discrete MAs for diverse applications, such as physical-layer security and intelligent reflecting surface (IRS)-assisted communications. For instance, in \cite{10901621}, the authors discretized the transmit rail, formulated the MA-placement task as a path-selection problem on a multipartite graph, and solved it with an enumeration-with-pruning algorithm with a sequential update heuristic. In addition, the authors of \cite{wei2024movable} addressed a downlink IRS-assisted MISO scenario, jointly optimizing the MA positions, base station (BS) beamformer, and IRS phase shifts via a graph-based alternating optimization (AO) algorithm.

\subsection{Motivations}
In principle, prevailing algorithms for discrete MA placement continue to rely on exhaustive enumeration whose worst-case complexity scales exponentially with the array dimension. In particular, the BnB family epitomizes this limitation and rapidly becomes computationally prohibitive even for a moderate number of antenna elements \cite{10437926}, which severely deviates from the massive MIMO evolution in modern wireless communication systems. Furthermore, the majority of existing contributions formulated the placement task of discrete MAs as a generic mathematical program and subsequently call upon off-the-shelf solvers, e.g., BnB or graph-based method, without leveraging the combinatorial and geometric structure inherent to discrete MAs. While such approaches remain capable of producing feasible solutions, they overlook the distinctive characteristics of the problem, yielding results that are neither scalable nor revealing. {\blue In particular, their computational cost becomes prohibitive as the system parameters increase, and the resulting position configurations provide limited insight into how antenna positioning reshapes the channel.} Motivated by these observations, we seek to exploit the discrete structure of the MA configuration itself and develop a bespoke algorithmic framework. Our approach aims to deliver near-optimal {\blue mutual information (MI)} at a fraction of the computational cost, thereby enriching theoretical insight and facilitating practical deployment.
\subsection{\textcolor{black}{Contributions}}
In this paper, we investigate the design of {discrete} MA positions to maximize the MI in an uplink multi-user MIMO (MU-MIMO) system. To comprehensively capture the impact of MAs on wireless channels, we study the MA design problem under both perfect {and imperfect} channel state information (CSI) assumptions. Through rigorous analysis of the mathematical properties {of the formulated problem}, we propose efficient design algorithms with low computational complexity. The main contributions are summarized as follows:

\begin{itemize}
	\item Under the assumption of perfect CSI, we first formulate a {discrete} MA position design problem, to maximize the uplink MI in MU-MIMO systems. We then prove that the objective function of the formulated problem falls into the category of monotone submodular functions subject to a \textcolor{black}{2-system} constraint.
	
	\item To address the discrete MA positioning problem, we develop a low-complexity \textcolor{black}{distance-constrained submodular position search (DCSPS)} algorithm that enforces minimum antenna spacing constraints while maintaining polynomial-time complexity. Unlike existing approaches in \cite{10243545,10535440 }, we provide a theoretical performance guarantee for the proposed algorithm: It achieves \textcolor{black}{at least} a $\textcolor{black}{1/3}$-approximation ratio relative to the optimum.
	
	\item We further consider the robust MA position design problem under \textcolor{black}{the assumption of imperfect CSI}. Specifically, by adopting an additive statistical channel estimation error model, we formulate a robust design problem aiming to maximize the expected MI, and further extend the submodularity of MA position design to the imperfect CSI scenario. \textcolor{black}{Based on this, we propose a robust DCSPS (R-DCSPS) scheme that comes with a performance guarantee for the surrogate objective.}
\end{itemize}

\textcolor{black}{Numerical results demonstrate that, under both perfect and imperfect CSI, the proposed scheme achieves over 90\% of the MI achieved by the optimal solution over traditional fixed-antenna systems, while reducing computational complexity by 34.4 times compared to the BnB method. The proposed \textcolor{black}{DCSPS} design ensures an effective balance between MI and complexity, and demonstrates enhanced robustness in the presence of channel estimation errors.}

\textit{Notations}: Let \(a\), \(\mathbf{a}\), and \(\mathbf{A}\) denote scalar, vector, and matrix, respectively. {The imaginary unit is denoted by $\jmath \triangleq \sqrt{-1}$.} The operators \((\cdot)^{\ast}\), \((\cdot)^{T}\), \((\cdot)^{H}\), and \(\operatorname{Tr}(\cdot)\) denote the conjugate, transpose, conjugate transpose, and trace of a matrix, respectively. $\operatorname{det}\left(\cdot\right)$ denotes the determinant of a square matrix. The operator \(\operatorname{vec}(\cdot)\) denotes the vectorization of a matrix, and \(\operatorname{diag}(\cdot)\) generates a diagonal matrix whose diagonal elements are extracted from a vector. $\mathrm{blkdiag}(\mathbf{A}_1,\dots, \mathbf{A}_n)$ denotes the block diagonal matrix formed by placing matrices $\mathbf{A}_1, \dots, \mathbf{A}_n$ along the main diagonal. The Frobenius norm of a matrix is denoted by \(\|\cdot\|_{F}\). The defining equality is denoted by $\triangleq$. The notations \(\mathbb{R}^{M \times N}\), \(\mathbb{C}^{M \times N}\), and \(\mathbb{N}^{M \times N}\) refer to the sets of real, complex, and integer matrices of dimension \(M \times N\), respectively. The matrix \(\mathbf{I}_{M}\) denotes the \(M \times M\) identity matrix. The notation \(\mathbf{A} \succeq \mathbf{0}\) means that \(\mathbf{A}\) is positive semidefinite, whereas \(\mathbf{A} \geqslant \mathbf{0}\) indicates that \(\mathbf{A}\) is entry-wise non-negative. \( a \in \mathcal{A} \) denotes that \( a \) is an element of the set \( \mathcal{A} \), while \(\lvert \mathcal{A} \rvert\) denotes the cardinality of the set \(\mathcal{A}\). Finally, \(\mathcal{A} \cap \mathcal{B}\) and \(\mathcal{A} \cup \mathcal{B}\) represent the intersection and union of sets \(\mathcal{A}\) and \(\mathcal{B}\), respectively. $\emptyset$ denotes an empty set. \( \mathcal{A} \subseteq \mathcal{B} \) denotes that \( \mathcal{A} \) is a subset of \( \mathcal{B} \). $ \mathcal{V} \setminus \{v\}$ denotes the set difference, representing all elements in $\mathcal{V}$ except the element $v$. $\{ v \in \mathcal{V} : \text{condition} \}$ defines the set of all elements $v$ from $\mathcal{V}$ that satisfy the given condition. \( \mathbf{n} \sim \mathcal{CN}(\boldsymbol{\mu}, \mathbf{C}) \) indicates that the random vector \( \mathbf{n} \) follows a circularly symmetric complex Gaussian distribution with mean vector \( \boldsymbol{\mu} \) and covariance matrix \( \mathbf{C} \).

\section{System Model and Problem Formulation}

In this section, we present the mathematical model of a discretely adjustable movable antenna {MIMO} (MA-MIMO) channel in a multi-user uplink communication scenario. Subsequently, we formulate the MI maximization problem by designing the antenna index vector (AIV).

\subsection{MA-MIMO System Model}
\begin{figure}[t]
	\centering
	\includegraphics[width=3in]{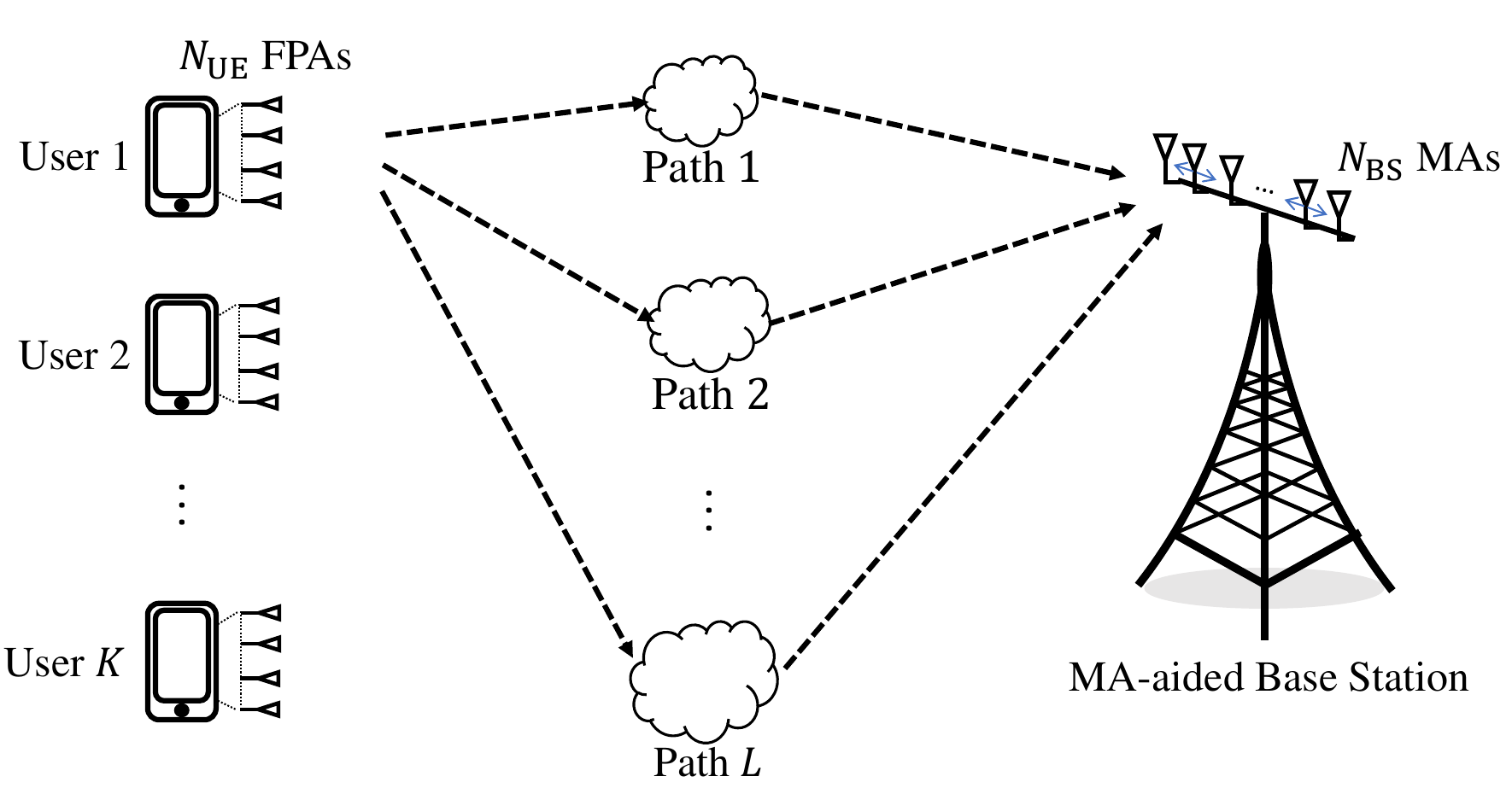}
	\caption{{{An u}plink MA-aided MU-MIMO communication system.}}
	\label{fig1}
\end{figure}

Consider a single-cell MU-MIMO uplink communication system, as shown in Fig.~\ref{fig1}, consisting of one BS and $K$ users. Specifically, each user deploys \(N_{\mathrm{U}}\) fixed-position antennas (FPAs) arranged with half-wavelength spacing on a uniform linear array (ULA), while the BS is equipped with \(N_{\mathrm{BS}}\) MAs{\footnote{{Although this study confines MA deployment to the BS, its conclusions still apply when MAs are also deployed at each user, as guaranteed by the AO principle \cite{10243545}.}}} constrained to move along a one-dimensional linear array\footnote{While this paper focuses on linear arrays for a neat presentation, the conclusions can be readily extended to planar arrays.}. The antenna position vector for MAs {is given by}~\cite{10243545}
\begin{equation}    
	\mathbf{r}=\left[r_{1},r_{2},\ldots,r_{N_{\mathrm{BS}}}\right]\in \mathbb{R}^{1\times N_{\mathrm{BS}}},
	\label{Equ.1}
\end{equation}
where \(r_{m}\) denotes the {position} of the \(m\)-th MA.

In this paper, we consider a discrete MA system for practical implementation, where each MA can occupy only one distinct position among $N_\mathrm{S}$ predefined uniform grids. Without loss of generality, we assume $N_\mathrm{S}\gg N_\mathrm{BS}$ to facilitate discrete MA deployment. An example of a discretely adjustable MA array and a comparison between the FPA are presented in Fig.~\ref{figma}. By denoting the AIV of MAs as
\begin{equation}    
	\mathbf{v}=\left[v_{1},v_{2},\ldots,v_{N_{\mathrm{BS}}}\right]
	\in \mathbb{N}^{1\times N_{\mathrm{BS}}},
\end{equation}
where {$v_m \in \{1,2, \cdots, N_{\mathrm{S}}\}$} is the integer-valued position {index} for the \(m\)-th MA, the position of each MA in \eqref{Equ.1} is further represented as
\begin{equation}    
	\mathbf{r}=\frac{{A}}{N_{\mathrm{S}}}\mathbf{v},
	\label{Equ.2}
\end{equation}
where $A$ denotes the aperture of the array. For conventional fixed ULAs, 
the antenna spacing is typically fixed as \(\lambda/2\), i.e., $v_m=m$, $N_\mathrm{S}= N_\mathrm{BS}$, and $A=(N_{\mathrm{BS}}-1)\lambda/2$, where $\lambda$ denotes the carrier wavelength.

The received field response matrix \(\mathbf{F}_k(\mathbf{v})\) at the BS corresponding to the \(k\)-th user is expressed as
\begin{equation}
	\mathbf{F}_k(\mathbf{v})
	= \left[ \mathbf{f}_k(v_{1}), \mathbf{f}_k(v_{2}), \ldots, \mathbf{f}_k(v_{N_{\mathrm{BS}}}) \right]
	\in \mathbb{C}^{L \times N_{\mathrm{BS}}},\label{Fk}
\end{equation}
where {$L$ is the number of scattering paths, and} the field response vector \(\mathbf{f}_k(v_{m})\) corresponding to the \(m\)-th MA is given by
\begin{equation}
	\mathbf{f}_k(v_{m})
	= \left[ e^{{\jmath}\frac{2\pi A}{N_{\mathrm{S}}\lambda}v_{m}\cos\phi_{1,k}},\ 
	\ldots,\ 
	e^{{\jmath}\frac{2\pi A}{N_{\mathrm{S}}\lambda}v_{m}\cos\phi_{L,k}} \right]^{T}
	\in \mathbb{C}^{L \times 1},
\end{equation}
with \(\phi_{l,k}\) denoting the azimuth angle of arrival (AoA) of the \(l\)-th scattering path for the \(k\)-th user.
\begin{figure}[t]
	\centering
	\includegraphics[width=3.2in]{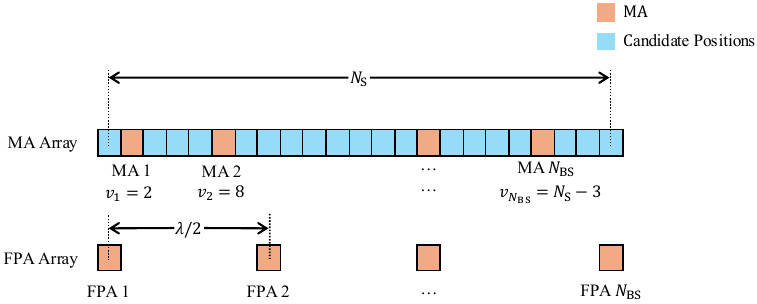}
	\caption{Discretely adjustable MA array with $N_\mathrm{BS}$ antenna elements and $N_\mathrm{S}$ possible positions.}
	\label{figma}
\end{figure}

Meanwhile, with the FPA at the users, the transmit field response matrix at the $k$-th user \textcolor{black}{is given by}
\begin{equation}
	\mathbf{G}_k
	=\left[\mathbf{g}_{1,k},\mathbf{g}_{2,k},\ldots,\mathbf{g}_{N_{\mathrm{U}},k}\right]
	\in \mathbb{C}^{L \times N_{\mathrm{U}}},\label{G}
\end{equation}
where
\begin{equation}
	\mathbf{g}_{n,k}
	=\left[e^{{\jmath} \pi (n-1) \cos \theta_{1,k}},
	\ldots,
	e^{{\jmath} \pi (n-1) \cos \theta_{L,k}}
	\right]^{T}
	\in \mathbb{C}^{L \times 1},\label{g}
\end{equation}
and \(\theta_{l,k}\) denotes the azimuth angle of departure (AoD) for the \(l\)-th scattering path corresponding to the $k$-th user. Therefore, the uplink MA-MIMO channel matrix for the $k$-th user is expressed as
\begin{equation}
	\mathbf{H}_k\left(\mathbf{v}\right)
	=\mathbf{F}_k^{H}(\mathbf{v})
	\mathbf{\Sigma}_k
	\mathbf{G}_k,
	\label{eq1}
\end{equation}
where \(\mathbf{\Sigma}_k=\mathrm{diag}\left(\alpha_{1,k},\alpha_{2,k},\ldots,\alpha_{L,k}\right)\), with the complex path gain $\alpha_{l,k}$ corresponding to the $l$-th scattering path from the $k$-th user to the BS.
Finally, the received signal at the BS is given by
\begin{equation}
	\mathbf{y}
	=\sum_{k=1}^{K}\mathbf{H}_k\left(\mathbf{v}\right)\mathbf{x}_k
	+\mathbf{n}{,}\label{9}
\end{equation}
where \(\mathbf{x}_k\) denotes the transmit signal from the $k$-th user. It is modeled as complex circularly symmetric with $\mathbb{E}\left\{\mathbf{x}_k^{H}\mathbf{x}_k\right\} \le P$, where \(P\) denotes the maximum transmit power. In addition, \(\mathbf{n}\sim \mathcal{CN}\left(\mathbf{0},\sigma^{2}\mathbf{I}_{N_{\mathrm{BS}}}\right)\) represents the additive white Gaussian noise (AWGN) at the BS, with $\sigma^2$ being the noise power.

\subsection{\texorpdfstring{MI}{MI} Maximization via AIV Design}
We now discuss the AIV design problem for MI maximization in the considered discretely adjustable MA-aided uplink MU-MIMO system. Specifically, the sum-rate of an uplink MU-MIMO system is upper bounded by the MI \cite{tse2005fundamentals}
\begin{equation}
	\sum_{k=1}^{K} R_k \leq \log_2 \det \left(\mathbf{I}_{N_{\mathrm{BS}}} 
	+\rho\sum_{k=1}^K
	\mathbf{H}_{k}(\mathbf{v})\mathbf{Q}_k\mathbf{H}_{k}^{H}(\mathbf{v})\right), \label{region}
\end{equation}
where $R_k$ denotes the achievable rate of the $k$-th user, $\rho=1/\sigma^2$ denotes the normalized SNR, and {$\mathbf{Q}_k=\mathbb{E}\left\{\mathbf{x}_k\mathbf{x}_k^H\right\}$} is the transmit covariance matrix of the $k$-th {user}. Here, $\mathbf{H}_k(\mathbf{v})$ denotes the obtained CSI associated with the $k$-th user, which may be perfectly known or subject to estimation errors. This theoretical upper bound can be achieved by adopting the minimum mean-square error receiver with successive interference cancellation (MMSE-SIC) \cite[Sec. 10.2]{tse2005fundamentals}.

In \eqref{region}, a rate region is described through the MI, which is the theoretical upper bound of the achievable sum-rate. Within this region, different transmit strategies yield different rate tuples. In essence, this region characterizes the “potential” of the communication channel. Therefore, we discuss the MI maximization via MA design, to expand the rate region as much as possible. This formulation quantifies the {\blue MI gain enabled by MA and reveals how antenna positioning reshapes the wireless channel \cite{8426033,pan2020multicell}.}

The MI maximization problem is formulated as
\begin{equation}
	\begin{aligned}
		\mathcal{P}_0:\max _{{\mathbf{v}}, \mathbf{Q}_k} & \quad \log _2 \operatorname{det}\left(\mathbf{I}_{N_{\mathrm{BS}}}+\rho\sum_{k=1}^{K} \mathbf{H}_k(\mathbf{v})\mathbf{Q}_k \mathbf{H}_k^{H}(\mathbf{v})\right) \\
		\text { s.t. } & \quad \text{C1: } 1 \leq v_{m} \leq N_{\mathrm{S}}, \\
		& \quad {\text{C2: } \mathbf{v} \in \mathbb{N}^{1 \times N_{\mathrm{BS}}},} \\
		& \quad \text{C3: } \left|v_{n}-v_{m}\right| \geq N_{\mathrm{D}}, \ {\forall}\ 1 \leq n,m \leq N_{\mathrm{BS}}, n \neq m, \\
		& \quad {\text{C4: }\operatorname{Tr}(\mathbf{Q}_k) \leq P}, \\
		& \quad \text{C5: } \mathbf{Q}_k \succeq \mathbf{0}.
	\end{aligned}
	\label{P0}
\end{equation}
	
In problem $\mathcal{P}_0$, C1 and C2 ensure that each MA is restricted to a discrete position among the predefined grids. C3 imposes a minimum spacing of {$N_{\mathrm{D}}$} grid units between MAs to maintain sufficient spatial separation and mitigate mutual interference. C4 limits the transmit power, while C5 specifies that the transmit covariance matrices $\mathbf{Q}_k$'s are positive semidefinite (PSD). Typically, the system has two primary design degrees of freedom (DoFs): the AIV \(\mathbf{v}\) and the transmit covariance matrices \(\mathbf{Q}_k\)'s. To focus on characterizing the fundamental limits of discrete MA-assisted wireless systems, in this paper we assume isotropic transmissions, i.e., \(\mathbf{Q}_k = \frac{P}{N_{\mathrm{U}}}\mathbf{I}_{N_{\mathrm{U}}}\), for the users. This setting allows us to highlight the MI brought by MAs. Accordingly, problem $\mathcal{P}_0$ is transformed as
\begin{equation}
	\begin{aligned}
		\mathcal{P}_1:\max _{{\mathbf{v}}} & \quad \log _2 \operatorname{det}\left(\mathbf{I}_{N_{\mathrm{BS}}}+\rho\sum_{k=1}^{K} \mathbf{H}_k(\mathbf{v}) {\mathbf{H}_k^{H}(\mathbf{v})}\right) \\
		\text { s.t. } & \quad \text{C1: } 1 \leq v_{m} \leq N_{\mathrm{S}}, \\
		& \quad {\text{C2: } \mathbf{v} \in \mathbb{N}^{1 \times N_{\mathrm{BS}}},} \\
		& \quad \text{C3: } \left|v_{n}-v_{m}\right| \geq N_{\mathrm{D}}, \ {\forall}\ 1 \leq n,m \leq N_{\mathrm{BS}}, n \neq m. 
	\end{aligned}
	\label{P1}
\end{equation}
\begin{remark}
	{\blue Since channel estimation is typically performed at the BS through uplink pilot training, instantaneous channel state information at the transmitter (CSIT) is generally unavailable to the distributed users. We therefore adopt the channel-independent isotropic covariance \(\mathbf{Q}_k=\frac{P}{N_{\mathrm{U}}}\mathbf{I}_{N_{\mathrm{U}}}\) in the considered uplink system.} For scenarios requiring joint optimization of AIV \(\mathbf{v}\) and transmit covariance matrices \(\mathbf{Q}_k\)'s, an AO approach can be employed \cite{10243545}. In such a framework, the MA positioning subproblem (optimizing \(\mathbf{v}\) with fixed \(\mathbf{Q}_k\)'s) can leverage the submodular optimization techniques developed in this paper, while the transmit covariance matrices \(\mathbf{Q}_k\)'s can be optimized in parallel by well-established beamforming methods such as singular value decomposition-based transceivers with water-filling power allocation.
	\label{remark:1}
\end{remark}

By denoting $\mathbf{H}(\mathbf{v})=\left[\mathbf{H}_{1}(\mathbf{v}),\mathbf{H}_{2}(\mathbf{v}),\ldots,\mathbf{H}_{K}(\mathbf{v})\right]$,
we obtain a compact version of AIV optimization problem, as shown below.
\begin{equation}
	\begin{aligned}
		\mathcal{P}_{2}: 
		\max_{\mathbf{v}} 
		& \quad  c(\mathbf{v})\triangleq\log_2 \det \left(\mathbf{I}_{N_{\mathrm{BS}}}
		+\rho
		\mathbf{H}(\mathbf{v})\mathbf{H}^{H}(\mathbf{v})\right)\\
		\text { s.t. } & \quad \text{C1: } 1 \leq v_{m} \leq N_{\mathrm{S}}, \\
		& \quad {\text{C2: } \mathbf{v} \in \mathbb{N}^{1 \times N_{\mathrm{BS}}},} \\
		& \quad \text{C3: } \left|v_{n}-v_{m}\right| \geq N_{\mathrm{D}}, \ {\forall} 1 \leq n,m \leq N_{\mathrm{BS}}, n \neq m.
	\end{aligned}
	\label{P5}
\end{equation}
\begin{remark}
	From problem~$\mathcal{P}_2$, we observe that the discretely adjustable MA design problem is an integer and combinatorial optimization problem, and solving it by exhaustive search is NP-hard. Even the widely-used BnB methods entail prohibitively high computational complexity (exponential in the MA array size) in practical applications. Therefore, a more computationally-efficient algorithm that attains a high MI is highly desirable. In this paper, we shall propose a submodular optimization framework for MI maximization by AIV design under different CSI assumptions.
	\label{remark:2}
\end{remark}
\section{{Preliminaries on Submodular Optimization}}
In this section, we first introduce the fundamental concepts of submodularity, including its definition and key properties. We then discuss the relationship between submodularity and concavity to provide a deeper understanding of the subject.
\subsection{Key Definitions and Main Properties}
\vspace{1ex}
\noindent
\textit{Definition 1 (Submodularity) }\cite{petersen1987stabilization,kelmans1983multiplicative}:
Consider a ground set of $n$ objects $\mathcal{V} \triangleq \{v_1,\dots,v_n\}$ and a set-function $f: 2^{\mathcal{V}} \to \mathbb{R}$ that assigns a real value to each $\mathcal{S} \subseteq \mathcal{V}$. The function $f$ is submodular if, for all $\mathcal{A}, \mathcal{B} \subseteq \mathcal{V}$,
\begin{equation}
	f(\mathcal{A}) + f(\mathcal{B})
	\;\;\ge\;\;
	f(\mathcal{A}\cup\mathcal{B}) \;+\;
	f(\mathcal{A}\cap\mathcal{B}).
	\label{Equation.1}
\end{equation}

To understand why inequality \eqref{Equation.1} holds and how it characterizes submodularity, we reclaim this concept using an equivalent definition of submodular functions based on incremental gains, thereby making the principle of diminishing returns more explicit. Specifically, we define the \emph{incremental gain} of adding an element $v \in \mathcal{V}\setminus \mathcal{A}$ to a subset $\mathcal{A} \subseteq \mathcal{V}$ as
\begin{equation}
	\Delta_f(v \mid \mathcal{A})\triangleq f\left(\mathcal{A}\cup\{v\}\right)
	\;-\;
	f(\mathcal{A}).
\end{equation}
Based on that, \eqref{Equation.1} can be equivalently reformulated as follows:

If $f$ is submodular, then for all
$\mathcal{A} \subseteq \mathcal{B} \subseteq \mathcal{V}$ and $v \in \mathcal{V}$,
\begin{equation}
	\Delta_f(v \mid \mathcal{A})
	\;\;\geq\;\;
	\Delta_f(v \mid \mathcal{B}),
	\label{Equation.2}
\end{equation}
which illustrates that the incremental gain for an element declines as we grow the underlying set. If \eqref{Equation.1}--\eqref{Equation.2} hold with equality, $f$ is said to be \emph{modular}, meaning each element contributes independently of others.

In other words, submodularity can be seen as a formal manifestation of the diminishing returns principle: Once a certain set of elements has been selected, the additional value gained by including a new element is smaller (or at most equal) than when adding that same element to a smaller set. Submodularity therefore encodes a natural “penalty” for repeatedly adding new elements to an already large set, preventing the overall value from inflating too rapidly and ensuring that the impact of each successive element remains bounded. In essence, this property is analogous to concavity in continuous-domain optimization problems, which we discuss in the next subsection.


\vspace{1ex}
\noindent
\textit{Definition 2 (Monotonicity)}:
Submodularity often appears jointly with \emph{monotonicity}, which states that $f(\mathcal{A}) \le f(\mathcal{B})$ whenever $\mathcal{A} \subseteq \mathcal{B} \subseteq \mathcal{V}$. For a submodular function $f$, this is equivalent to
\begin{equation}
	\Delta_f(v \mid \mathcal{V} \setminus \{v\})
	\;\;\ge\;\;
	0,
	\quad
	\forall v \in \mathcal{V}.
\end{equation}

Monotonicity guarantees non-negative incremental gains from each additional element. In the scenario we consider, it ensures that choosing more antennas never decreases the MI, thus facilitating incremental selection strategies.

\vspace{1ex}
\noindent
\textit{Definition 3 ($k$-System)} \cite{feldman2011improved}:
A set system $(\mathcal{V}, \mathcal{I})$, where $\mathcal{I} \subseteq 2^{\mathcal{V}}$ is a collection of independent sets, is called a $k$-system if the following conditions are met:
\begin{enumerate}
	\item $\emptyset \in \mathcal{I}$ (non-emptiness property).
	\item If $\mathcal{A} \in \mathcal{I}$ and $\mathcal{B} \subseteq \mathcal{A}$, then $\mathcal{B} \in \mathcal{I}$ (heredity property).
	\item For any subset $\mathcal{S} \subseteq \mathcal{V}$, the ratio of the sizes of the largest and smallest maximal independent sets within $\mathcal{A}$ is bounded by $k$, i.e.,
	\begin{equation}
		\frac{\xi_{\mathrm{max}}(\mathcal{S})}{\xi_{\mathrm{min}}(\mathcal{S})} \leq k,
	\end{equation}
	where $\xi_{\mathrm{max}}(\mathcal{S})$ and $\xi_{\mathrm{min}}(\mathcal{S})$ denote the sizes of the largest and smallest maximal independent set contained in $\mathcal{S}$, respectively. An independent set $\mathcal{T} \in \mathcal{I}$ is maximal within $\mathcal{S} \subseteq \mathcal{V}$ if $\mathcal{T} \subseteq \mathcal{S}$ and no element can be added to it without violating independence, i.e., $\forall x \in \mathcal{S} \setminus \mathcal{T},  \mathcal{T} \cup \{x\} \notin \mathcal{I}$.
\end{enumerate}

The above definition provides a mathematically rigorous description of submodularity. To gain a deeper understanding of the essence of submodularity, we will discuss the relationship between submodularity and concavity in the following part.

\subsection{{Submodularity and Concavity}}

Submodularity is often viewed as a discrete analog of concavity, primarily through the concept of diminishing incremental gains \cite{fujishige1978polymatroidal,kelmans1983multiplicative}. In the continuous domain, a concave function exhibits diminishing returns via a non-increasing derivative. While, in the discrete domain, submodularity imposes that adding an element \(v\) to a larger set yields the same or smaller incremental benefit than adding it to a smaller set. This discrete “flattening slope” directly parallels the continuous case.

Fig.~\ref{structure} provides a visual comparison. The left panel illustrates a concave function whose slope decreases as the input grows, while the right panel shows a hypercube of all subsets of a ground set. Edges in the hypercube connect subsets in a containment relationship, highlighting that moving from a smaller subset to a larger one reduces the “discrete derivative”, just as a concave function’s slope flattens.

\begin{figure}[ht]
	\centering
	\includegraphics[width=3in]{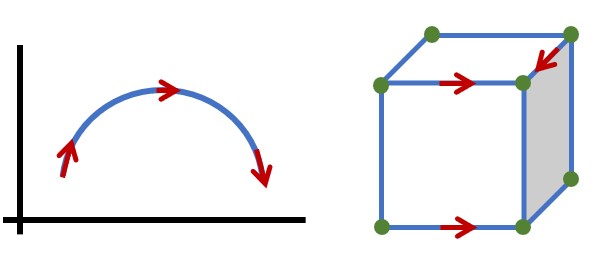}
	\caption{A unified view of diminishing incremental gains for concave and submodular functions.}
	\label{structure}
\end{figure}

Submodularity’s diminishing-return (DR) property is formally captured in \eqref{Equation.2}, ensuring that incremental gains decay as the chosen set expands. In continuous optimization, concavity can often be tackled via gradient-based or specialized methods. Analogously, for submodular set-function maximization, a simple greedy algorithm leverages the DR structure by repeatedly selecting the element with the highest incremental contribution, akin to taking the “steepest slope” in concavity.

These parallels yield powerful approximation results in the discrete setting. Although maximizing a monotone submodular function \(f\) under a \textcolor{black}{$k$-system} constraint is NP-hard, the aforementioned greedy approach guarantees an approximation ratio \textcolor{black}{of $(1+k)^{-1}$ relative to the optimal value \cite{feldman2011improved}}. This result stems from precisely the same DR principle that makes concavity amenable to efficient methods in the continuous domain.

\section{Submodular Optimization with Perfect CSI}
In this section, we investigate the AIV optimization problem under the assumption of perfect CSI. Specifically, we first reformulate problem $\mathcal{P}_2$ as a discrete selection process from a predefined codebook. Based on that, we show that the resulting problem has a monotone submodular objective function, and further propose a low-complexity DCSPS algorithm.
\subsection{Problem Reformulation}
While the optimization in {$\mathcal{P}_2$} provides a general MI maximization, we highlight the selection nature of the AIV design by specifying discrete sets for MAs. Let
\begin{equation}
	\mathcal{W} \triangleq \{1,2,\dots, N_{\mathrm{S}}\}
\end{equation}
which contains all the possible position indices for MAs. In principle, for discrete MA design, we select $N_{\mathrm{BS}}$ elements from $\mathcal{W}$ (with minimum spacing $N_{\mathrm{D}}$) for AIV, and these selected elements constitute a set $\mathcal{S}$. 
Specifically, we have
\begin{equation}
	\mathcal{S} \subseteq \mathcal{W}, \quad |\mathcal{S}| = {N_{\mathrm{BS}}}, \quad 
	|i-j| \ge N_{\mathrm{D}}, \quad \forall i,j \in \mathcal{S}.
\end{equation}

Once set $\mathcal{S}$ is chosen, the resulting AIV adopts corresponding entries from $\mathcal{S}$. We define the predefined codebook channel as
\begin{equation}
	\bar{\mathbf{H}}=\bar{\mathbf{F}}^{H}\mathbf{\Sigma}{\mathbf{G}} \in \mathbb{C}^{N_{\mathrm{S}} \times {KN_{\mathrm{U}}}}, \label{eq:codebook channel}
\end{equation}
where 
\begin{equation}
	\mathbf{\Sigma}=\operatorname{blkdiag}\left(\mathbf{\Sigma}_1,\mathbf{\Sigma}_2,\ldots,\mathbf{\Sigma}_K\right) \in \mathbb{C}^{KL \times KL},\label{Eq:Sigma}
\end{equation}
\begin{equation}
	\mathbf{G}=\operatorname{blkdiag}\left(\mathbf{G}_1,\mathbf{G}_2,\ldots,\mathbf{G}_K\right) \in \mathbb{C}^{KL \times KN_{\mathrm{U}}},\label{Eq:AllG}
\end{equation}
and
\begin{equation}
	\bar{\mathbf{F}}=\left[\bar{\mathbf{F}}_1^H,\bar{\mathbf{F}}_2^H,\ldots,\bar{\mathbf{F}}_K^H\right]^H \in \mathbb{C}^{KL \times N_{\mathrm{S}}},
	\label{eq:codebook F}
\end{equation}
with
\begin{equation}
	\bar{\mathbf{F}}_k=\left[\mathbf{f}_k(1),\mathbf{f}_k(2),\ldots,\mathbf{f}_k(N_{\mathrm{S}})\right] \in \mathbb{C}^{L \times N_{\mathrm{S}}}.
	\label{eq:codebook Fk}
\end{equation} 
Then, the selected MA-MIMO channel is written as 
\begin{equation}
	\mathbf{H}_{\mathcal{S}}
	= 
	\bar{\mathbf{F}}_{\mathcal{S}}^{H} 
	\mathbf{\Sigma}
	{\mathbf{G}}, 
	\label{eq:selected H}
\end{equation}
where $\bar{\mathbf{F}}_{\mathcal{S}}$ collects column vectors from $\bar{\mathbf{F}}$ whose indices stored in set $\mathcal{S}$.

The AIV optimization problem {$\mathcal{P}_2$ is} thus reformulated as
\begin{equation}
	\begin{aligned}
		\mathcal{P}_{3}: 
		\max_{\mathcal{S}} 
		& \quad {\blue c(\mathcal{S})}=\log_2 \det \left(\mathbf{I}_{\left|\mathcal{S}\right|}
		+\rho
		\mathbf{H}_{\mathcal{S}}\mathbf{H}_{\mathcal{S}}^{H}\right)\\
		\text{s.t.} 
		& \quad \text{C1:}\ \mathcal{S} \subseteq \mathcal{W}, \\
		& \quad {\text{C2:}\ |\mathcal{S}| = N_{\mathrm{BS}},}  \\
		&\quad \text{C3:}\ |i-j| \ge N_{\mathrm{D}}, \;\forall i,j \in \mathcal{S}.
	\end{aligned}
	\label{P2}
\end{equation}

The transformation from {problem} $\mathcal{P}_2$ to $\mathcal{P}_3$ offers a set-based perspective, capturing the dynamic evolution of the feasible region and enabling the derivation of global performance guarantees from local incremental gains. 
Generally, {problem} $\mathcal{P}_3$ is an integer programming problem and is NP-hard in the worst case. However, in the following we show that the objective function of $\mathcal{P}_3$ exhibits a monotone submodular property, which enables a DCSPS solution with a guaranteed performance bound.

\subsection{Submodularity}
The optimization problem $\mathcal{P}_3$ in \eqref{P2} is solved by a sequential selection process, where at each step, one MA position is added to the set $\mathcal{S}$. Thus, at the $m$-th step of this process, a partial set of $m-1$ positions has already been selected, and the $m$-th position is chosen from the remaining feasible candidates in $\mathcal{W}$ that satisfy the minimum spacing constraint.
Given that submodularity captures the evolution of the objective function’s incremental gain during the selection process, {\blue we first quantify the incremental gain for an arbitrarily selected set in the following lemma.}
\begin{lemma}
	{\blue For any $\mathcal{S}\subseteq\mathcal{W}$ and any position $v\notin\mathcal{S}$, the incremental gain of the objective in $\mathcal{P}_3$ is given by}
	\begin{equation}
		{\blue \begin{aligned}
			\Delta(v\mid\mathcal{S})
			&=c(\mathcal{S}\cup\{v\})-c(\mathcal{S})\\
			&=\log_2\left(1+\rho\mathbf{w}^{H}(v)\mathbf{A}_{\mathcal{S}}\mathbf{w}(v)\right).
		\end{aligned}}
		\label{eq:om}
	\end{equation}
	{\blue Here,}
	\begin{equation}
		{\blue \mathbf{A}_{\mathcal{S}}=\left(\mathbf{I}_{K\!N_{\mathrm{U}}}+\rho\sum_{u\in\mathcal{S}}\mathbf{w}(u)\mathbf{w}^{H}(u)\right)^{-1}.}\label{eq:A}
	\end{equation}
	{\blue Equivalently, the objective in $\mathcal{P}_3$ can be written as}
	\begin{equation}
		{\blue c(\mathcal{S})=\log_2\det\left(\mathbf{I}_{K\!N_{\mathrm{U}}}+\rho\sum_{u\in\mathcal{S}}\mathbf{w}(u)\mathbf{w}^{H}(u)\right).}
	\end{equation}
	{\blue The per-position channel vector is}
	\begin{equation}
		{\blue \mathbf{w}(v)
		= \mathbf{G}^{H}
		\mathbf{\Sigma}^{H}
		\mathbf{f}(v),}
		\label{eq:w_{rm}}
	\end{equation}
	{\blue where $\mathbf{f}(v)$ denotes the $v$-th column of $\bar{\mathbf{F}}$ in \eqref{eq:codebook F}.}
	\label{lemma:1}
\end{lemma}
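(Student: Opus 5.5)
The plan is to prove the lemma in three steps: rewrite the channel row by row, move the determinant to the $K N_{\mathrm{U}}$-dimensional side, and then peel off a single rank-one update.

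\textbf{Row structure.} From \eqref{eq:selected H}, the rows of $\mathbf{H}_{\mathcal{S}}$ are $\mathbf{f}^H(u)\mathbf{\Sigma}\mathbf{G}$ for $u\in\mathcal{S}$. By \eqref{eq:w_{rm}}, each such row equals $\mathbf{w}^H(u)$, so
\begin{equation*}
\mathbf{H}_{\mathcal{S}}^H=[\mathbf{w}(u)]_{u\in\mathcal{S}}\in\mathbb{C}^{KN_{\mathrm{U}}\times|\mathcal{S}|}.
\end{equation*}
Consequently,
\begin{equation*}
\mathbf{H}_{\mathcal{S}}^H\mathbf{H}_{\mathcal{S}}=\sum_{u\in\mathcal{S}}\mathbf{w}(u)\mathbf{w}^H(u).
\end{equation*}
This identity does not depend on the order in which the elements of $\mathcal{S}$ are listed, since permuting the rows of $\mathbf{H}_{\mathcal{S}}$ leaves the determinant unchanged.

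\textbf{Equivalent objective.} Sylvester's determinant identity, $\det(\mathbf{I}+\rho\mathbf{X}\mathbf{Y})=\det(\mathbf{I}+\rho\mathbf{Y}\mathbf{X})$, applied with $\mathbf{X}=\mathbf{H}_{\mathcal{S}}$ and $\mathbf{Y}=\mathbf{H}_{\mathcal{S}}^H$, converts $c(\mathcal{S})$ into the $KN_{\mathrm{U}}\times KN_{\mathrm{U}}$ form stated in the lemma. For $\mathcal{S}=\emptyset$ both sides equal $0$, so the identity also holds in that case.

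\textbf{Incremental gain.} Define
\begin{equation*}
\mathbf{M}_{\mathcal{S}}=\mathbf{I}_{KN_{\mathrm{U}}}+\rho\sum_{u\in\mathcal{S}}\mathbf{w}(u)\mathbf{w}^H(u).
\end{equation*}
This matrix is Hermitian positive definite, because it is the identity plus a PSD matrix. Hence it is invertible and $\mathbf{A}_{\mathcal{S}}=\mathbf{M}_{\mathcal{S}}^{-1}$ in \eqref{eq:A} is well defined and PSD. Since $v\notin\mathcal{S}$, we have $\mathbf{M}_{\mathcal{S}\cup\{v\}}=\mathbf{M}_{\mathcal{S}}+\rho\mathbf{w}(v)\mathbf{w}^H(v)$. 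Factoring out $\mathbf{M}_{\mathcal{S}}$ gives
\begin{equation*}
\det\mathbf{M}_{\mathcal{S}\cup\{v\}}=\det\mathbf{M}_{\mathcal{S}}\,\det\!\left(\mathbf{I}+\rho\mathbf{A}_{\mathcal{S}}\mathbf{w}(v)\mathbf{w}^H(v)\right).
\end{equation*}
The matrix determinant lemma, which is Sylvester's identity in the rank-one case, evaluates the second factor as $1+\rho\mathbf{w}^H(v)\mathbf{A}_{\mathcal{S}}\mathbf{w}(v)$. Taking $\log_2$ of both sides and subtracting $c(\mathcal{S})$ yields \eqref{eq:om}. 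The argument of the logarithm is at least $1$ because $\mathbf{A}_{\mathcal{S}}\succeq\mathbf{0}$, so the expression is well defined.

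No step here is a genuine obstacle. The only point that needs care is bookkeeping in the first step: the row of $\mathbf{H}_{\mathcal{S}}$ is $\mathbf{f}^H(u)\mathbf{\Sigma}\mathbf{G}$, and its conjugate transpose is $\mathbf{G}^H\mathbf{\Sigma}^H\mathbf{f}(u)$. This is exactly the definition of $\mathbf{w}(u)$ in \eqref{eq:w_{rm}}, so the conjugations and the block structure of $\mathbf{\Sigma}$ and $\mathbf{G}$ in \eqref{Eq:Sigma}--\eqref{Eq:AllG} are consistent.
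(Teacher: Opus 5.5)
Your proposal is correct and follows the paper's own route: write $\mathbf{H}_{\mathcal{S}}^{H}$ as the matrix whose columns are $\mathbf{w}(u)$, use Sylvester's determinant identity to pass to the $KN_{\mathrm{U}}\times KN_{\mathrm{U}}$ form, then apply the matrix determinant lemma to the rank-one update $\mathbf{A}_{\mathcal{S}}^{-1}+\rho\mathbf{w}(v)\mathbf{w}^{H}(v)$. Your extra bookkeeping (the conjugations in the row structure, the case $\mathcal{S}=\emptyset$, and the invertibility of $\mathbf{M}_{\mathcal{S}}$) only spells out details the paper leaves implicit.
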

\begin{IEEEproof}
	First, based on \eqref{eq:w_{rm}}, we reorganize the MI as
	\begin{equation}
		\begin{aligned}
			c(\mathbf{v}) \;=\;& \log_2\det\left(\mathbf{I}_{N_{\mathrm{BS}}}
			+ \rho
			\mathbf{H}(\mathbf{v})
			\mathbf{H}^{H}(\mathbf{v})\right)\\
			=\;& \log_2\det\left(\mathbf{I}_{N_{\mathrm{BS}}} 
			+ \rho
			\mathbf{W}(\mathbf{v})
			\mathbf{W}^{H}(\mathbf{v})\right)\\
			=\;& \log_2\det\left(\mathbf{I}_{K\!N_{\mathrm{U}}} 
			+ \rho
			\mathbf{W}^{H}(\mathbf{v})
			\mathbf{W}(\mathbf{v})\right)\\
			=\;& \log_2\det\left(
			\mathbf{I}_{{K\!N_{\mathrm{U}}}} 
			+ \rho
			\sum_{m=1}^{N_{\mathrm{BS}}} 
			\mathbf{w}(v_m)
			\mathbf{w}^{H}(v_m)
			\right),
		\end{aligned}
		\label{eq:sum-rate}
	\end{equation}
	where \(\mathbf{W}^{H}(\mathbf{v})
	= \left[\mathbf{w}(v_1),\mathbf{w}(v_2),\ldots,\mathbf{w}(v_{N_{\mathrm{BS}}})\right]\). {\blue For any $\mathcal{S}\subseteq\mathcal{W}$ and $v\notin\mathcal{S}$, the matrix determinant lemma gives}
	\begin{equation}
		\begin{aligned}
			{\blue c(\mathcal{S}\cup\{v\})}
			=& {\blue \log_2\det\left(\mathbf{A}_{\mathcal{S}}^{-1}+\rho\mathbf{w}(v)\mathbf{w}^{H}(v)\right)}\\
			=& {\blue c(\mathcal{S})
			+\log_2\left(1+\rho\mathbf{w}^{H}(v)\mathbf{A}_{\mathcal{S}}\mathbf{w}(v)\right),}
		\end{aligned}
		\label{eq:cm}
	\end{equation}
	{\blue Subtracting $c(\mathcal{S})$ from both sides yields \eqref{eq:om},} which completes the proof.
\end{IEEEproof}
\begin{remark}
	Lemma 1 provides a fundamental insight into how spatial DoFs govern MI gains in MA systems. Specifically, \eqref{eq:A} reveals that {\blue $\mathbf{A}_{\mathcal{S}}$} is essentially the projection onto the orthogonal complement of the subspace spanned by the selected antennas. Therefore, \eqref{eq:om} indicated that the incremental gain of each selection depends on its capability to project energy onto the remaining uncorrelated spatial dimensions in the scattering environment.
	\label{remark:3}
\end{remark}

{\blue Next, we establish the monotonicity and the diminishing-return property of the MI function $c$ over an arbitrarily selected position set, as stated in the following lemma.}
\begin{lemma}
	\label{lemma:2}
	{\blue For any selected set $\mathcal{S}\subseteq\mathcal{W}$ and any position $v\notin\mathcal{S}$, the incremental gain satisfies $\Delta(v\mid\mathcal{S})\geq 0$. Moreover, for any nested sets $\mathcal{S}\subseteq\mathcal{T}\subseteq\mathcal{W}$ and any position $v\notin\mathcal{T}$, it holds that}
	\begin{equation}
		{\blue \Delta(v\mid\mathcal{T})\leq\Delta(v\mid\mathcal{S}).}
		\label{eq:nested_diminishing_returns}
	\end{equation}
\end{lemma}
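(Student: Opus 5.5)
The plan is to work entirely from the closed form of Lemma~\ref{lemma:1}, namely $\Delta(v\mid\mathcal{S})=\log_2\left(1+\rho\mathbf{w}^{H}(v)\mathbf{A}_{\mathcal{S}}\mathbf{w}(v)\right)$. Both claims then reduce to statements about the Hermitian matrix $\mathbf{A}_{\mathcal{S}}$ in \eqref{eq:A}. It is convenient to write
\begin{equation}
\mathbf{M}_{\mathcal{S}}\triangleq\mathbf{I}_{KN_{\mathrm{U}}}+\rho\sum_{u\in\mathcal{S}}\mathbf{w}(u)\mathbf{w}^{H}(u),
\end{equation}
so that $\mathbf{A}_{\mathcal{S}}=\mathbf{M}_{\mathcal{S}}^{-1}$.

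\emph{Monotonicity.} Each term $\mathbf{w}(u)\mathbf{w}^{H}(u)$ is PSD and $\rho>0$, so $\mathbf{M}_{\mathcal{S}}\succeq\mathbf{I}\succ\mathbf{0}$. Hence $\mathbf{M}_{\mathcal{S}}$ is invertible and $\mathbf{A}_{\mathcal{S}}\succ\mathbf{0}$. Therefore $\mathbf{w}^{H}(v)\mathbf{A}_{\mathcal{S}}\mathbf{w}(v)\ge 0$, the argument of the logarithm is at least $1$, and $\Delta(v\mid\mathcal{S})\ge 0$.

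\emph{Diminishing returns.} For nested sets $\mathcal{S}\subseteq\mathcal{T}$, we have
\begin{equation}
\mathbf{M}_{\mathcal{T}}=\mathbf{M}_{\mathcal{S}}+\rho\sum_{u\in\mathcal{T}\setminus\mathcal{S}}\mathbf{w}(u)\mathbf{w}^{H}(u)\succeq\mathbf{M}_{\mathcal{S}}\succ\mathbf{0}.
\end{equation}
The key step is operator monotonicity of the matrix inverse on positive definite matrices: $\mathbf{X}\succeq\mathbf{Y}\succ\mathbf{0}$ implies $\mathbf{Y}^{-1}\succeq\mathbf{X}^{-1}$. To justify it, write $\mathbf{Y}^{-1/2}\mathbf{X}\mathbf{Y}^{-1/2}\succeq\mathbf{I}$. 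All eigenvalues of this matrix are then at least $1$, so its inverse $\mathbf{Y}^{1/2}\mathbf{X}^{-1}\mathbf{Y}^{1/2}\preceq\mathbf{I}$, and congruence by $\mathbf{Y}^{-1/2}$ gives $\mathbf{X}^{-1}\preceq\mathbf{Y}^{-1}$. Applying this with $\mathbf{X}=\mathbf{M}_{\mathcal{T}}$ and $\mathbf{Y}=\mathbf{M}_{\mathcal{S}}$ yields $\mathbf{A}_{\mathcal{T}}\preceq\mathbf{A}_{\mathcal{S}}$. Evaluating the quadratic form at $\mathbf{w}(v)$ gives
\begin{equation}
\mathbf{w}^{H}(v)\mathbf{A}_{\mathcal{T}}\mathbf{w}(v)\le\mathbf{w}^{H}(v)\mathbf{A}_{\mathcal{S}}\mathbf{w}(v).
\end{equation}
Since $x\mapsto\log_2(1+\rho x)$ is nondecreasing on $x\ge 0$, this gives \eqref{eq:nested_diminishing_returns}.

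The only nontrivial ingredient is the inverse-monotonicity step, and the congruence argument above settles it. As an alternative that avoids matrix inequalities, one could add the elements of $\mathcal{T}\setminus\mathcal{S}$ one at a time and use Sherman--Morrison:
\begin{equation}
\mathbf{A}_{\mathcal{S}\cup\{u\}}=\mathbf{A}_{\mathcal{S}}-\frac{\rho\,\mathbf{A}_{\mathcal{S}}\mathbf{w}(u)\mathbf{w}^{H}(u)\mathbf{A}_{\mathcal{S}}}{1+\rho\,\mathbf{w}^{H}(u)\mathbf{A}_{\mathcal{S}}\mathbf{w}(u)}.
\end{equation}
Each update subtracts a PSD rank-one term, so the quadratic form in $\mathbf{w}(v)$ is nonincreasing along the chain from $\mathcal{S}$ to $\mathcal{T}$.

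Note that the condition $v\notin\mathcal{T}$ only ensures the increments are well defined as additions of new elements; the argument itself never uses it.
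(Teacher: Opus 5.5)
Your proof is correct and takes essentially the same route as the paper: monotonicity follows from $\mathbf{A}_{\mathcal{S}}\succ\mathbf{0}$, and diminishing returns follow from the fact that matrix inversion reverses the L\"owner order. The differences are cosmetic. The paper applies this fact to single-element updates $\mathbf{A}_{\mathcal{S}\cup\{u\}}^{-1}\succeq\mathbf{A}_{\mathcal{S}}^{-1}$ and chains over $\mathcal{T}\setminus\mathcal{S}$, which is what your Sherman--Morrison alternative mirrors. You instead compare $\mathbf{M}_{\mathcal{T}}\succeq\mathbf{M}_{\mathcal{S}}$ in one step, and you also prove the order-reversal fact that the paper takes as known.
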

\begin{IEEEproof}
{\blue Since the matrix inverted in \eqref{eq:A} is positive definite, we have $\mathbf{A}_{\mathcal{S}}\succ\mathbf{0}$, and \eqref{eq:om} therefore yields $\Delta(v\mid\mathcal{S})\geq 0$, which establishes the monotonicity of $c$.}

{\blue It remains to prove the diminishing-return property. To this end, we first characterize how $\mathbf{A}_{\mathcal{S}}$ evolves when an arbitrary position $u\notin\mathcal{S}$ is added to $\mathcal{S}$. According to \eqref{eq:A}, the relation between $\mathbf{A}_{\mathcal{S}}$ and $\mathbf{A}_{\mathcal{S}\cup\{u\}}$ is given by the rank-one update}
\begin{equation}
	{\blue \mathbf{A}_{\mathcal{S}\cup\{u\}}}
	= \left({\blue \mathbf{A}_{\mathcal{S}}^{-1}}
	+ \rho{\blue \mathbf{w}(u)\mathbf{w}^{H}(u)}
	\right)^{-1}.
	\label{eq:AS_update}
\end{equation}
{\blue By applying the Woodbury matrix identity, $\mathbf{A}_{\mathcal{S}\cup\{u\}}$ can be equivalently rewritten as}
\begin{equation}
	{\blue \mathbf{A}_{\mathcal{S}\cup\{u\}}}
	={\blue \mathbf{A}_{\mathcal{S}}}
	-\frac{\rho{\blue \mathbf{A}_{\mathcal{S}}\mathbf{w}(u)\mathbf{w}^{H}(u)\mathbf{A}_{\mathcal{S}}}}
	{1+\rho{\blue \mathbf{w}^{H}(u)\mathbf{A}_{\mathcal{S}}\mathbf{w}(u)}}.
	\label{eq:WI}
\end{equation}
{\blue Because $\rho\mathbf{w}(u)\mathbf{w}^{H}(u)\succeq\mathbf{0}$, it follows from \eqref{eq:AS_update} that $\mathbf{A}_{\mathcal{S}\cup\{u\}}^{-1}\succeq\mathbf{A}_{\mathcal{S}}^{-1}$, and hence $\mathbf{A}_{\mathcal{S}\cup\{u\}}\preceq\mathbf{A}_{\mathcal{S}}$ since matrix inversion reverses the L\"owner ordering. Therefore, any fixed position $v\notin\mathcal{S}\cup\{u\}$ satisfies}
\begin{equation}
	{\blue \mathbf{w}^{H}(v)\mathbf{A}_{\mathcal{S}\cup\{u\}}\mathbf{w}(v)
	\leq\mathbf{w}^{H}(v)\mathbf{A}_{\mathcal{S}}\mathbf{w}(v),}
\end{equation}
{\blue which, together with the monotonicity of the logarithm in \eqref{eq:om}, leads to}
\begin{equation}
	{\blue \Delta(v\mid\mathcal{S}\cup\{u\})\leq\Delta(v\mid\mathcal{S}).}
\end{equation}
{\blue For any nested pair $\mathcal{S}\subseteq\mathcal{T}$ and any position $v\notin\mathcal{T}$, inserting the positions of $\mathcal{T}\setminus\mathcal{S}$ one at a time and chaining this inequality yields}
\begin{equation}
	{\blue \Delta(v\mid\mathcal{T})\leq\Delta(v\mid\mathcal{S}),}
\end{equation}
{\blue so the incremental gain of the same position is non-increasing for every pair of nested selected sets, which is exactly the diminishing-return form \eqref{Equation.2} of submodularity. Therefore, the objective function $c$ is monotone submodular.}
\end{IEEEproof}

Lemma~\ref{lemma:2} establishes the diminishing-return property of the incremental gain {\blue of a given position over nested selected sets}. This property directly yields the following structural characteristics of the MA design problem.

\begin{proposition}
	The diminishing-return property implies that the objective function of the MA-enabled MI maximization problem is both submodular and monotonic.
	\label{proposition:1}
\end{proposition}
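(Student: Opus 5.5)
The plan is to show that the two conclusions of Lemma~\ref{lemma:2} give exactly the incremental-gain characterizations in Definitions~1 and~2. The only real work is converting the diminishing-return inequality \eqref{Equation.2} back into the lattice form \eqref{Equation.1}.

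\emph{Monotonicity.} Take $\mathcal{A}\subseteq\mathcal{B}\subseteq\mathcal{W}$ and list $\mathcal{B}\setminus\mathcal{A}=\{u_1,\dots,u_r\}$.
\begin{itemize}
\item Write $c(\mathcal{B})-c(\mathcal{A})$ as a telescoping sum $\sum_{i=1}^{r}\Delta(u_i\mid\mathcal{A}\cup\{u_1,\dots,u_{i-1}\})$.
\item Each summand is nonnegative by the first part of Lemma~\ref{lemma:2}, since $\mathbf{A}_{\mathcal{S}}\succ\mathbf{0}$ in \eqref{eq:om}.
\end{itemize}
Hence $c(\mathcal{A})\le c(\mathcal{B})$. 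In particular $\Delta(v\mid\mathcal{W}\setminus\{v\})\ge 0$, which is the condition of Definition~2.

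\emph{Submodularity.} Take arbitrary $\mathcal{A},\mathcal{B}\subseteq\mathcal{W}$ and set $\mathcal{C}=\mathcal{A}\cap\mathcal{B}$. Enumerate $\mathcal{B}\setminus\mathcal{A}=\{u_1,\dots,u_r\}$ and write $\mathcal{U}_{i-1}=\{u_1,\dots,u_{i-1}\}$. Then telescope twice:
\begin{equation*}
c(\mathcal{A}\cup\mathcal{B})-c(\mathcal{A})=\sum_{i=1}^{r}\Delta\left(u_i\mid\mathcal{A}\cup\mathcal{U}_{i-1}\right),\qquad
c(\mathcal{B})-c(\mathcal{C})=\sum_{i=1}^{r}\Delta\left(u_i\mid\mathcal{C}\cup\mathcal{U}_{i-1}\right).
\end{equation*}
The second identity holds because $\mathcal{B}=\mathcal{C}\cup(\mathcal{B}\setminus\mathcal{A})$.

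For each $i$ we have the nesting $\mathcal{C}\cup\mathcal{U}_{i-1}\subseteq\mathcal{A}\cup\mathcal{U}_{i-1}$, and $u_i$ lies in neither set. So \eqref{eq:nested_diminishing_returns} applies termwise and gives $\Delta(u_i\mid\mathcal{A}\cup\mathcal{U}_{i-1})\le\Delta(u_i\mid\mathcal{C}\cup\mathcal{U}_{i-1})$. Summing over $i$ yields $c(\mathcal{A}\cup\mathcal{B})-c(\mathcal{A})\le c(\mathcal{B})-c(\mathcal{A}\cap\mathcal{B})$, which is \eqref{Equation.1}.

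\emph{Main obstacle.} The only delicate point is bookkeeping: Lemma~\ref{lemma:2} is stated only for $v\notin\mathcal{T}$. The enumeration must ensure that every added element lies outside the larger set, and this holds because $u_i\in\mathcal{B}\setminus\mathcal{A}$ and the $u_j$ are distinct. The degenerate case $\mathcal{B}\subseteq\mathcal{A}$ has $r=0$, so both sums are empty and equality holds.

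Finally, the spacing constraint C3 does not enter this argument. Submodularity and monotonicity are properties of $c$ on all of $2^{\mathcal{W}}$, with $c(\emptyset)=0$ by the determinant formula of Lemma~\ref{lemma:1}. The constraint is treated separately as the $2$-system.
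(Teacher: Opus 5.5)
Your proposal is correct and follows the paper's route. The paper obtains Proposition~\ref{proposition:1} directly from Lemma~\ref{lemma:2}: nonnegative increments give monotonicity, and the nested diminishing-return inequality is identified with \eqref{Equation.2}, relying on the equivalence with \eqref{Equation.1} asserted in Section~III. Your telescoping argument makes explicit the step the paper takes for granted, namely the implication \eqref{Equation.2}$\Rightarrow$\eqref{Equation.1} (and the chaining of increments for monotonicity), and your bookkeeping that each $u_i$ lies outside the larger set is correct.
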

\begin{remark}
	Lemma~\ref{lemma:2} reveals the physical essence of how the MA affects the wireless channel. Specifically, \eqref{eq:WI} describes the relationship concerning the orthogonal complement of the subspace spanned by the antennas {\blue before and after adding a new antenna position to an arbitrarily selected set.} When only {\blue a} few antennas are deployed, {\blue $\mathbf{A}_{\mathcal{S}}$} approaches {\blue the} identity {\blue matrix}, indicating {\blue an} abundance of DoFs that {\blue allows} new antennas to capture strong uncorrelated multipath components. As {\blue the selected set grows}, {\blue $\mathbf{A}_{\mathcal{S}}$} contracts because prior antennas occupy dominant scattering paths, forcing subsequent additions to extract energy from diminishing orthogonal subspaces. {\blue Therefore}, as more MA positions are determined, the remaining spatial DoFs become progressively more constrained, and the incremental gain diminishes accordingly.
	\label{remark:4}
\end{remark}

Beyond revealing the physical mechanism of how the MA influences the wireless channel, submodularity represents a highly desirable property in integer and combinatorial optimization. \textcolor{black}{Critically, for monotone submodular objective functions, the greedy algorithm guarantees a solution achieving at least $(1-1/e)$ of the optimal value \cite{sviridenko2017optimal} with cardinality constraints. However, in our case, the minimum spacing constraint $|i-j| \ge N_{\mathrm{D}}$ introduces additional dependence to the feasible region.}
\begin{lemma}
	Consider the independence system defined by
	\begin{equation}
		\mathcal{I} \triangleq \left\{ \mathcal{S} \subseteq \mathcal{W} : |i - j| \geq N_{\mathrm{D}},\ \forall i,j \in \mathcal{S} \right\}.
	\end{equation}
	Then, for \( N_{\mathrm{D}} \geq 2 \) and \( N_{\mathrm{BS}} \geq 2 \), the system \( \mathcal{I} \) forms a \( k \)-system with \( k = 2 \).
	\label{lemma:3}
\end{lemma}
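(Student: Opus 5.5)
The plan is to verify the three conditions of Definition~3 directly for the set system $(\mathcal{W},\mathcal{I})$. Throughout, the spacing condition $|i-j|\geq N_{\mathrm{D}}$ is read for distinct $i\neq j$.

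\emph{Non-emptiness and heredity.} Both are immediate. The empty set and all singletons satisfy the pairwise spacing condition vacuously. If every pair in $\mathcal{A}$ is at distance at least $N_{\mathrm{D}}$, then so is every pair in any $\mathcal{B}\subseteq\mathcal{A}$.

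\emph{Ratio bound.} The substance is the third condition. Fix an arbitrary $\mathcal{S}\subseteq\mathcal{W}$, and let $\mathcal{T}_1,\mathcal{T}_2$ be any two maximal independent sets within $\mathcal{S}$. I will show $|\mathcal{T}_1|\leq 2|\mathcal{T}_2|$. Applied to the largest and the smallest maximal sets, this gives $\xi_{\max}(\mathcal{S})/\xi_{\min}(\mathcal{S})\leq 2$; the case $\mathcal{S}=\emptyset$ is trivial by convention. The argument is a charging scheme: define a map $\pi:\mathcal{T}_1\to\mathcal{T}_2$ as follows.
\begin{itemize}
\item If $x\in\mathcal{T}_1\cap\mathcal{T}_2$, set $\pi(x)=x$.
\item If $x\in\mathcal{T}_1\setminus\mathcal{T}_2$, then $x\in\mathcal{S}\setminus\mathcal{T}_2$. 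By maximality of $\mathcal{T}_2$ within $\mathcal{S}$, the set $\mathcal{T}_2\cup\{x\}\notin\mathcal{I}$. Hence some $y\in\mathcal{T}_2$ has $|x-y|<N_{\mathrm{D}}$, and we set $\pi(x)=y$.
\end{itemize}
It then suffices to show every $y\in\mathcal{T}_2$ has at most two preimages under $\pi$.

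\emph{Counting preimages.} This is the key step and the only place needing care.
\begin{itemize}
\item Case $y\in\mathcal{T}_1$. Every other element of $\mathcal{T}_1$ is at distance at least $N_{\mathrm{D}}$ from $y$, so it cannot be charged to $y$. Thus $y$ has exactly one preimage.
\item Case $y\notin\mathcal{T}_1$. All preimages lie in the window $\{y-N_{\mathrm{D}}+1,\dots,y+N_{\mathrm{D}}-1\}$. Split the window into the left half $\{y-N_{\mathrm{D}}+1,\dots,y-1\}$ and the right half $\{y+1,\dots,y+N_{\mathrm{D}}-1\}$. Any two integers in the same half differ by at most $N_{\mathrm{D}}-2<N_{\mathrm{D}}$, so they cannot both belong to the independent set $\mathcal{T}_1$. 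Hence each half contains at most one element of $\mathcal{T}_1$, and $y$ has at most two preimages.
\end{itemize}
Summing over $y\in\mathcal{T}_2$ yields $|\mathcal{T}_1|\leq 2|\mathcal{T}_2|$, so $k=2$.

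\emph{Remarks.} I expect this one-dimensional half-window counting to be the main obstacle. It is exactly where the linear geometry enters, and it explains why the constant is $2$: a point has two sides. The hypothesis $N_{\mathrm{D}}\geq 2$ only marks when the constraint is nontrivial; for $N_{\mathrm{D}}=1$ the system is a uniform matroid, i.e., a $1$-system. Tightness can be illustrated with $\mathcal{S}=\{1,\,N_{\mathrm{D}},\,2N_{\mathrm{D}}-1\}$:
\begin{itemize}
\item $\{N_{\mathrm{D}}\}$ is maximal, because $N_{\mathrm{D}}$ is within distance $N_{\mathrm{D}}-1$ of both $1$ and $2N_{\mathrm{D}}-1$;
\item $\{1,2N_{\mathrm{D}}-1\}$ is independent, with spacing $2N_{\mathrm{D}}-2\geq N_{\mathrm{D}}$.
\end{itemize}
So the ratio $2$ is attained, which is where the assumption $N_{\mathrm{BS}}\geq 2$ (room for at least two antennas) is reflected.
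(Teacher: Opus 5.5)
Your proof is correct and follows the same route as the paper. Both arguments use maximality to cover one maximal set by the radius-$(N_{\mathrm{D}}-1)$ neighborhoods of the points of the other, and then note that a window of $2N_{\mathrm{D}}-1$ consecutive grid points holds at most two mutually $N_{\mathrm{D}}$-spaced points. Your charging map $\pi$ is slightly cleaner than the paper's write-up: it assigns each element to a single neighborhood, which justifies the count the paper states as an equality $|\mathcal{I}|=\sum_{x}|\mathcal{I}\cap\mathcal{B}_x|$, where only $\leq$ holds because the neighborhoods $\mathcal{B}_x$ can overlap.
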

\begin{IEEEproof}[Proof]
	The non-emptiness and heredity properties are readily verified. We therefore focus on proving $k=2$. For any subset $\mathcal{A} \subseteq \mathcal{W}$ and any two maximal independent sets $\mathcal{I}, \mathcal{J} \subseteq \mathcal{A}$, since $\mathcal{J}$ is maximal in $\mathcal{A}$, for any point $y \in \mathcal{A}$, there exists $x \in \mathcal{J}$ such that $|y - x| < N_{\mathrm{D}}$. Thus, defining the neighborhood of $x$ with radius $N_{\mathrm{D}}$ within $\mathcal{A}$ as $\mathcal{B}_x = \left\{ y \in \mathcal{A} : |y - x| < N_{\mathrm{D}} \right\}$, we conclude that $\mathcal{A}$ is contained in the union of all $\mathcal{B}_x$, i.e., $\mathcal{A} \subseteq \bigcup_{x \in \mathcal{J}} \mathcal{B}_x$.
	
	Since all points in $\mathcal{I}$ are at least $N_{\mathrm{D}}$ apart, and $\mathcal{B}_x$ has diameter $2N_{\mathrm{D}}-1$, at most 2 points from $\mathcal{I}$ can lie in $\mathcal{B}_x$. Therefore, we can obtain
	\begin{equation}
		\left|\mathcal{I}\right|=\sum_{x\in \mathcal{J}}\left|\mathcal{I}\cap \mathcal{B}_x\right|\leq 2\left|\mathcal{J}\right|,
	\end{equation}
	which completes the proof.
\end{IEEEproof}

For monotone submodular maximization over a $k$-system constraint, the greedy algorithm guarantees a solution achieving at least \textcolor{black}{$(1+k)^{-1}$ of the optimal value. With $k=2$, this yields a performance guarantee of \textcolor{black}{at least} $\textcolor{black}{1/3}$.} While this bound provides a worst-case theoretical guarantee, our simulations in Sec.~VI {\blue will show that the MI gain achieved by the DCSPS typically reaches at least 90\% of the optimal solution's MI gain over fixed antennas.} This {\blue strong empirical result} further reinforces how submodularity endows low-complexity \textcolor{black}{DCSPS} algorithms with performance guarantees. Given the prohibitively high computational complexity of BnB methods, this combination of {\blue a theoretical guarantee and strong empirical MI results} is of paramount significance for real-world applications. The following subsection discusses the proposed \textcolor{black}{DCSPS} AIV design scheme.


\subsection{\textcolor{black}{Proposed \textcolor{black}{DCSPS} AIV Design Scheme}}
Based on the proved monotone submodularity property, we provide a \textcolor{black}{DCSPS} AIV design scheme for the proposed framework as a low-complexity scheme with an acceptable performance guarantee.

Specifically, we initialize an empty set for the selected MA positions $\mathcal{S} \leftarrow \emptyset$, indicating no antennas have yet been deployed. The iteration counter is set to $m \leftarrow 0$, while all $N_{\mathrm{S}}$ admissible locations on the antenna motion surface form the candidate pool $\mathcal{W}$. The algorithm will progressively select $N_{\mathrm{BS}}$ positions from $\mathcal{W}$ while strictly maintaining the minimum spacing constraint $|u - v| \geq N_{\mathrm{D}}$ between any two MAs to prevent electromagnetic coupling.

At each deployment iteration, the counter increases to $m \leftarrow m + 1$, representing the current number of MAs to be positioned. For every candidate location $v \in \mathcal{W} \setminus \mathcal{S}$ that satisfies the minimum distance requirement relative to all selected positions in $\mathcal{S}$, the communication system calculates the MI gain that would result from adding an MA at $v$. This incremental gain is quantified as
\begin{equation}
	\Delta(v | \mathcal{S}) = c(\mathcal{S} \cup \{v\}) - c(\mathcal{S}),
\end{equation}
where 
\begin{equation}
	c(\mathcal{S}) = \log_2 \det \left( \mathbf{I}_{\left|\mathcal{S}\right|} + \rho \mathbf{H}_{\mathcal{S}} \mathbf{H}_{\mathcal{S}}^H \right)\label{objr}
\end{equation}
represents the current MI of the partially configured antenna array. Here, $\mathbf{H}_{\mathcal{S}}$ is constructed from the predefined channel codebook $\mathbf{\bar{H}}$, which contains precomputed channel responses for all candidate locations.

The algorithm then identifies the optimal deployment position by solving
\begin{equation}
	v^* = \arg \max_{v} \Delta(v | \mathcal{S}),\label{greedyP}
\end{equation}
which physically corresponds to selecting the location that maximizes projection onto uncorrelated spatial dimensions in the scattering environment. This position $v^*$ is then added to the selected set via $\mathcal{S} \leftarrow \mathcal{S} \cup \{v^*\}$, triggering mechanical actuation to physically relocate the MA to the new position on the motion platform.

The entire process repeats until the counter reaches $m = N_{\mathrm{BS}}$, completing the full antenna array configuration. {\blue The number of iterations is determined in advance by $N_{\mathrm{S}}$ and $N_{\mathrm{BS}}$, so the procedure has a predetermined stopping point and does not involve a convergence process.} Throughout this sequential deployment, each iteration's decision directly corresponds to maximizing the exploitation of the remaining spatial DoFs, with the submodularity property ensuring that {\blue the }\textcolor{black}{DCSPS} {\blue scheme captures} progressively diminishing yet system-optimal MI gains. The overall low-complexity \textcolor{black}{DCSPS} scheme is summarized in \textbf{Algorithm \ref{alg1}}.
Beyond the $\textcolor{black}{1/3}$ performance guarantee afforded by submodularity, we now provide a physical explanation for the high MI attainable by the proposed \textcolor{black}{DCSPS} method.
\begin{remark}
	As indicated by \eqref{objr} and \eqref{greedyP}, the proposed \textcolor{black}{DCSPS} scheme operates on the principle of maximizing the incremental gain. This strategy aims to project as much energy as possible onto the remaining uncorrelated spatial dimensions within the scattering environment, thereby maximizing the utilization of the residual uncorrelated spatial DoFs. This directly enhances the MI.
	\label{remark:5}
\end{remark}
\begin{algorithm}[t]
	\caption{\textcolor{black}{DCSPS} Scheme with Perfect CSI}
	\label{alg1}
	\begin{algorithmic}[1]
		\REQUIRE $\mathbf{\Sigma}$, $\sigma^2$, $N_{\mathrm{U}}$, $N_{\mathrm{BS}}$, $N_{\mathrm{S}}$, $N_{\mathrm{D}}$, $\{ \theta_l \}_{l=1}^L$, $\{ \phi_l \}_{l=1}^L$, and $\lambda$
		\ENSURE $\mathcal{S}$ 
		\STATE Construct $\mathbf{G}$, $\bar{\mathbf{H}}$, $\bar{\mathbf{F}}$ via \eqref{Fk}-\eqref{g}, \eqref{eq:codebook channel}-\eqref{eq:codebook Fk};
		\STATE Initialize $\mathcal{S} \gets \emptyset$;
		\FOR{$m=1$ \TO $N_{\mathrm{BS}}$}
		\STATE $\mathrm{best\_value} \gets -\infty$, $\mathrm{best\_index} \gets \emptyset$;
		\FOR{$j=1$ \TO $N_{\mathrm{S}}$}
		\STATE $\mathrm{candidate\_valid} \gets \mathsf{TRUE}$;
		\IF{$\mathcal{S} \neq \emptyset$}
		\IF{$\min\limits_{s \in \mathcal{S}} |s - j| \leq N_{\mathrm{D}}$}
		\STATE $\mathrm{candidate\_valid} \gets \mathsf{FALSE}$; 
		\ENDIF
		\ENDIF
		\IF{$\mathrm{candidate\_valid}$}
		\STATE $\mathcal{S}_{\text{temp}} \gets \mathcal{S} \cup \{j\}$;
		\STATE Update $\mathbf{H}_{\mathcal{S}_{\text{temp}}}$ based on \eqref{eq:selected H};
		\STATE Obtain $c\left(\mathcal{S}_{\text{temp}}\right)$ based on \eqref{objr}; 
		\IF{$c\left(\mathcal{S}_{\text{temp}}\right) > \mathrm{best\_value}$}
		\STATE $\mathrm{best\_value} \gets c\left(\mathcal{S}_{\text{temp}}\right)$;
		\STATE $\mathrm{best\_index} \gets j$;
		\ENDIF
		\ENDIF
		\ENDFOR
		\IF{$\mathrm{best\_index} = \emptyset$}
		\STATE \textbf{error} ``No valid candidate found'';
		\ELSE
		\STATE $\mathcal{S} \gets \mathcal{S} \cup \{\mathrm{best\_index}\}$;
		\STATE Sort $\mathcal{S}$ in a descending order;
		\ENDIF
		\ENDFOR
	\end{algorithmic}
\end{algorithm}
\newcounter{TempEqCnt}                         
\setcounter{TempEqCnt}{\value{equation}} %
\setcounter{equation}{47}                           %
\begin{figure*}[!b]
	\hrulefill
	\begin{equation}
		\begin{aligned}
			\mathcal{P}_{5}:\,
			\max_{\mathcal{S}} &\quad c_{\mathrm{exp}}(\mathcal{S})\triangleq \mathbb{E}_{\Delta \mathbf{H}} \left\{
			\log_2 \det \left( 
			\mathbf{I}_{\left|\mathcal{S}\right|} 
			+ \rho \left(\hat{\mathbf{H}}_\mathcal{S} 
			+ \Delta \mathbf{H}_{\mathcal{S}}\right) \left(\hat{\mathbf{H}}_\mathcal{S} 
			+ \Delta \mathbf{H}_{\mathcal{S}}\right)^{H} \right) 
			\right\}\\
			\text{s.t.} 
			& \quad \text{C1:}\ \mathcal{S} \subseteq \mathcal{W}, \quad \text{C2:}\left|\mathcal{S}\right| = N_{\mathrm{BS}}, \quad \text{C3:} \left|i-j\right| \ge N_{\mathrm{D}}, \;\forall i,j \in \mathcal{S}.
		\end{aligned}
		\label{eq:P5}
	\end{equation}
\end{figure*}
\setcounter{equation}{\value{TempEqCnt}}

\subsection{Complexity Analysis}

The proposed \textcolor{black}{DCSPS} scheme exhibits a polynomial complexity of $\mathcal{O}(N_{\mathrm{S}} N_{\mathrm{BS}}^4)$, governed by the matrix operations in the iterative updates. Conversely, the BnB scheme incurs a prohibitive exponential cost of $\mathcal{O}(N_{\mathrm{eff}}^{N_{\mathrm{BS}}})$, driven by the exhaustive search over the effective position set $N_{\mathrm{eff}}$. By coupling this computational efficiency with a rigorous \textcolor{black}{1/3} approximation guarantee, the \textcolor{black}{DCSPS} approach avoids the intractability of combinatorial methods, establishing itself as a viable solution for practical implementations.

In the next section, we will extend the discussion to robust AIV design under imperfect CSI, highlighting how the submodularity property still applies in that scenario.

\section{Submodular Optimization with Imperfect CSI}
In practice, the perfect CSI assumption cannot be guaranteed due to {channel} estimation errors, limited pilot transmission, or feedback quantization \cite{9180053}. Consequently, it is crucial to discuss robust AIV design under imperfect CSI assumption.
In this section, we first present a commonly-used statistical channel estimation error model and then discuss whether the submodularity can be maintained under imperfect CSI. Finally, we propose an R-DCSPS design scheme that effectively exploits these conclusions.

\begin{figure*}[!t]
	\centering
	\begin{minipage}{2.75in}
		\centering
		\includegraphics[width=2.7in]{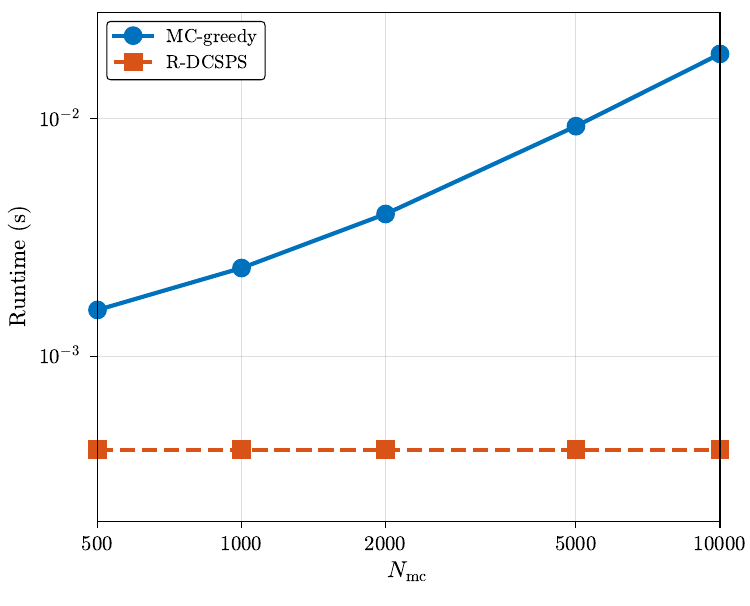}\\[-2pt]
		{\footnotesize (a)}
	\end{minipage}\hspace{0.25in}
	\begin{minipage}{2.75in}
		\centering
		\includegraphics[width=2.7in]{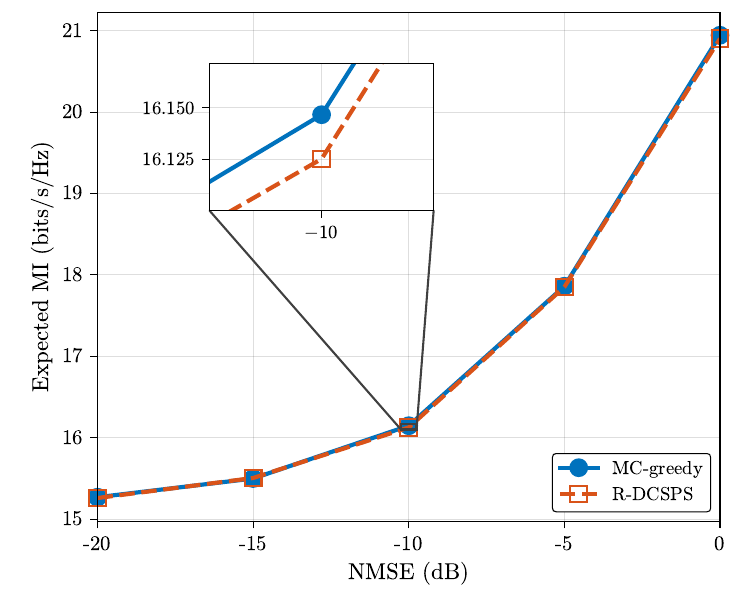}\\[-2pt]
		{\footnotesize (b)}
	\end{minipage}
	\caption{{\blue (a) Runtime v.s. $N_{\mathrm{mc}}$ at a target normalized mean square error (NMSE) of $-5$~dB; (b) Expected MI of the MC-greedy and R-DCSPS v.s. NMSE with $N_{\mathrm{mc}}=5000$. Results in both subfigures are averaged over 30 paired channel realizations with $N_{\mathrm{BS}}=4$, $N_{\mathrm{S}}=30$, $K=2$, $N_{\mathrm{U}}=2$, $L=20$, and $\mathrm{SNR}=30$~dB.}}
	\label{fig:jensen_surrogate}
\end{figure*}

\subsection{{Channel Estimation Error Model}}

We model the mismatch between the channel matrix \(\mathbf{H}\) and its estimate \(\hat{\mathbf{H}}\) via an additive error term, given by
\begin{equation}
	\mathbf{H} = \hat{\mathbf{H}} + \Delta \mathbf{H},
\end{equation}
where \(\hat{\mathbf{H}}\) denotes the estimated channel matrix, and \(\Delta \mathbf{H}\) is the estimation error matrix. 
This additive model is widely used in practical systems \cite{9180053}. For example, in time-division duplex (TDD) systems, the reciprocity-based channel estimation process may be corrupted by noise and interference during pilot transmissions, which are captured by \(\Delta \mathbf{H}\). By explicitly modeling \(\Delta \mathbf{H}\), we systematically incorporate error statistics into the subsequent optimization formulations, and further reveal the essence of the channel estimation error's effects.

Subsequently, we employ the statistical CSI error model \cite{9180053}, in which the channel estimation error \(\Delta \mathbf{H}\) follows the distribution $\operatorname{vec}\left(\Delta \mathbf{H}\right) \sim \mathcal{CN}\left(\mathbf{0}, \mathbf{\Psi}_{\mathrm{v}}\right)$, where the error covariance matrix $\mathbf{\Psi}_{\mathrm{v}}$ is a PSD matrix of dimension $KN_{\mathrm{U}}N_{\mathrm{BS}} \times KN_{\mathrm{U}}N_{\mathrm{BS}}$. Therefore, we have
	\begin{equation}
		\Delta \mathbf{H} \sim \mathcal{CN}\left(\mathbf{0}, \mathbf{\Psi}\right),
	\end{equation}
	where $\mathbf{\Psi}$ is a PSD matrix of dimension $N_{\mathrm{BS}} \times N_{\mathrm{BS}}$. The $(i,j)$-th element of $\mathbf{\Psi}$ is given by 
	\begin{equation}
		\mathbf{\Psi}_{i,j} = \sum_{k=1}^{KN_{\mathrm{U}}} \left[ \mathbf{\Psi}_{\mathrm{v}} \right]_{(k-1)N_{\mathrm{BS}} + i, \, (k-1)N_{\mathrm{BS}} + j}, \ i,j = 1,\ldots,N_{\mathrm{BS}}.
	\end{equation}
For a selected set $\mathcal{S}$, let $\Delta\mathbf{H}_{\mathcal{S}}$ denote the submatrix of $\Delta\mathbf{H}$ consisting of the rows indexed by $\mathcal{S}$, and let $\mathbf{\Psi}_{\mathcal{S}}\in\mathbb{C}^{\left|\mathcal{S}\right|\times\left|\mathcal{S}\right|}$ be the principal submatrix of $\mathbf{\Psi}$ consisting of the rows and columns indexed by $\mathcal{S}$. It then follows that $\mathbb{E}_{\Delta\mathbf{H}}\{\Delta\mathbf{H}_{\mathcal{S}}\Delta\mathbf{H}_{\mathcal{S}}^{H}\}=\mathbf{\Psi}_{\mathcal{S}}$. In particular, once the selection is complete, we have $\left|\mathcal{S}\right|=N_{\mathrm{BS}}$ and $\mathbf{\Psi}_{\mathcal{S}}=\mathbf{\Psi}$.

Since MA design is critically reliant on accurate CSI, directly employing the estimated CSI for MA optimization significantly degrades the MI in the presence of CSI inaccuracy. Addressing this challenge requires quantitatively characterizing the impact of such errors on the wireless channel. Subsequently, an optimization problem needs to be formulated to jointly address the dual objectives of mitigating the adverse effects introduced by CSI randomness and enhancing the MI within the MA design process.

\subsection{{Problem Formulation and Submodularity}}

After modeling channel estimation error accordingly, a common strategy is to consider the expected MI maximization problem for robust design \cite{9180053}. The robust optimization problem is given as \eqref{eq:P5} at the bottom of this page.

To address the stochastic nature of channel estimation errors, this formulation maximizes the expected {\blue MI $c_{\mathrm{exp}}(\mathcal{S})$ defined in \eqref{eq:P5}}. This metric explicitly captures the {\blue average MI} under CSI uncertainty, \textcolor{black}{thereby} optimizing long-term resource utilization efficiency \textcolor{black}{even in the presence of} imperfect channel conditions. Consequently, maximizing {\blue $c_{\mathrm{exp}}(\mathcal{S})$} provides a robust operational balance by emphasizing MI under the most frequent channel characteristics.

{\blue Although $c_{\mathrm{exp}}(\mathcal{S})$ is the desired optimization objective, the expectation over $\Delta\mathbf{H}$ does not admit a tractable closed-form expression. A direct greedy search would require Monte Carlo (MC) averaging for each candidate position in every selection step, which increases the computational complexity and makes the resulting search impractical for real-time position adaptation. We therefore apply Jensen's inequality to derive a deterministic closed-form surrogate. Such Jensen's inequality-based deterministic surrogates are also adopted in robust wireless communication design, as exemplified by \cite{li2023ergodic}.}

\setcounter{equation}{48}
According to Jensen's inequality, we have
\begin{equation}
	\begin{aligned}
		c_{\mathrm{exp}}(\mathcal{S})
		& \leq \log_2 \det \left( \mathbb{E}_{\Delta \mathbf{H}}\left\{\mathbf{I}_{\left|\mathcal{S}\right|} + \rho (\hat{\mathbf{H}}_{\mathcal{S}} + \Delta \mathbf{H}_{\mathcal{S}}) (\hat{\mathbf{H}}_{\mathcal{S}} + \Delta \mathbf{H}_{\mathcal{S}})^{H}\right\} \right) \\
		& \overset{(a)}{=} \log_2 \det \left( \mathbf{I}_{\left|\mathcal{S}\right|}  
		+ \rho\hat{\mathbf{H}}_{\mathcal{S}}\hat{\mathbf{H}}_{\mathcal{S}}^{H} 
		+ \rho\mathbb{E}_{\Delta \mathbf{H}}\{\Delta \mathbf{H}_{\mathcal{S}}\Delta \mathbf{H}_{\mathcal{S}}^{H}\} \right)\\
		& = \log_2 \det \left( \mathbf{I}_{\left|\mathcal{S}\right|}  
		+ \rho\hat{\mathbf{H}}_{\mathcal{S}} \hat{\mathbf{H}}_{\mathcal{S}}^{H} 
		+ \rho \mathbf{\Psi}_{\mathcal{S}} \right) \\
		& \overset{(b)}{=} \log_2 \det \left( \mathbf{I}_{\left|\mathcal{S}\right|}  
		+ \rho\hat{\mathbf{H}}_{\mathcal{S}}\hat{\mathbf{H}}_{\mathcal{S}}^{H} +\rho \mathbf{E}_{\mathcal{S}}\mathbf{E}_{\mathcal{S}}^H \right)\\
		& = \log _2 \operatorname{det}\left(\mathbf{I}_{\left|\mathcal{S}\right|} +\rho\left[\begin{array}{ll}
			\mathbf{E}_{\mathcal{S}} \ \ \hat{\mathbf{H}}_{\mathcal{S}}
		\end{array}\right]\left[\begin{array}{l}
			\mathbf{E}_{\mathcal{S}}^H \\
			\hat{\mathbf{H}}_{\mathcal{S}}^H
		\end{array}\right]\right)\\
		& = \log _2 \operatorname{det}\left(\mathbf{I}_{\left|\mathcal{S}\right|} +\rho \tilde{\mathbf{H}}_{\mathcal{S}}\tilde{\mathbf{H}}_{\mathcal{S}}^H\right){\blue \triangleq c_{\mathrm{r}}(\mathcal{S})},
	\end{aligned}
	\label{33}
\end{equation}
where step $(a)$ follows from
\begin{equation}
	\mathbb{E}_{\Delta\mathbf{H}}\{\hat{\mathbf{H}}_{\mathcal{S}}^{H}\Delta\mathbf{H}_{\mathcal{S}}\}
	= \hat{\mathbf{H}}_{\mathcal{S}}^{H}\mathbb{E}_{\Delta\mathbf{H}}\{\Delta\mathbf{H}_{\mathcal{S}}\} 
	= \mathbf{0},
\end{equation}
and similarly $\mathbb{E}_{\Delta\mathbf{H}}\{\Delta\mathbf{H}_{\mathcal{S}}^{H}\hat{\mathbf{H}}_{\mathcal{S}}\} = \mathbf{0}$.

In addition, $(b)$ can be guaranteed since $\mathbf{\Psi}$ is a PSD matrix and thus admits the factorization $\mathbf{\Psi}=\mathbf{E}\mathbf{E}^{H}$ with $\mathbf{E}=\mathbf{\Psi}^{1/2}\in\mathbb{C}^{N_{\mathrm{BS}}\times N_{\mathrm{BS}}}$, where $\mathbf{E}_{\mathcal{S}}\in\mathbb{C}^{\left|\mathcal{S}\right|\times N_{\mathrm{BS}}}$ denotes the submatrix of $\mathbf{E}$ consisting of the rows indexed by $\mathcal{S}$, so that $\mathbf{\Psi}_{\mathcal{S}}=\mathbf{E}_{\mathcal{S}}\mathbf{E}_{\mathcal{S}}^{H}$ holds for every selected set. 
Here, we reveal the inherent nature of the impact of channel estimation errors. First, we define the virtual channel matrix $\tilde{\mathbf{H}}_{\mathcal{S}}\in \mathbb{C}^{\left|\mathcal{S}\right| \times (N_{\mathrm{BS}}+K N_{\mathrm{U}})}$ as 
\begin{equation}
	\tilde{\mathbf{H}}_{\mathcal{S}}=\left[\mathbf{E}_{\mathcal{S}} \ \ \hat{\mathbf{H}}_{\mathcal{S}}\right].\label{34}
\end{equation}
Based on our derivation in Sec.~IV, we see that the surrogate objective $c_{\mathrm{r}}(\mathcal{S})$ in \eqref{33} is a monotone submodular function. {\blue Therefore, the $1/3$ performance guarantee of the R-DCSPS applies to the surrogate objective $c_{\mathrm{r}}(\mathcal{S})$ under the 2-system spacing constraint.}

{\blue To verify the rationale of adopting Jensen's inequality in \eqref{33}, we introduce a high-precision MC greedy baseline, denoted as MC-greedy, that directly optimizes the original expected MI objective $c_{\mathrm{exp}}(\mathcal{S})$ by evaluating each candidate's incremental gain over $N_{\mathrm{mc}}$ MC realizations of $\Delta\mathbf{H}$.}

{\blue Fig.~\ref{fig:jensen_surrogate}(a) quantifies the computational cost of directly optimizing $c_{\mathrm{exp}}(\mathcal{S})$. The runtime of the MC-greedy baseline drastically increases with $N_{\mathrm{mc}}$, whereas the runtime of the R-DCSPS remains independent of $N_{\mathrm{mc}}$. The MC-greedy baseline is $23.1$ times slower than the R-DCSPS when $N_{\mathrm{mc}}=5000$ and $46.5$ times slower when $N_{\mathrm{mc}}=10{,}000$. This is because the closed-form Jensen's surrogate $c_{\mathrm{r}}(\mathcal{S})$ eliminates the MC averaging required to evaluate the expected MI incremental gains. On the other hand, }
{\blue Fig.~\ref{fig:jensen_surrogate}(b) quantifies the performance loss caused by replacing $c_{\mathrm{exp}}(\mathcal{S})$ with $c_{\mathrm{r}}(\mathcal{S})$. Over the entire NMSE range from $-20$~dB to $0$~dB, the largest observed expected MI loss of the R-DCSPS relative to the high-precision MC-greedy baseline is $0.041$~bits/s/Hz, which is $0.20\%$ of the expected MI achieved by the MC-greedy baseline at the corresponding point. Hence, the Jensen's surrogate $c_{\mathrm{r}}(\mathcal{S})$ removes the dominant computational overhead of MC averaging while incurring negligible MI loss relative to direct MC optimization of $c_{\mathrm{exp}}(\mathcal{S})$.}

\begin{remark}
	The virtual channel matrix in \eqref{34} fundamentally establishes a virtual user paradigm where channel estimation errors are equivalent to introducing additional virtual users into the system. These phantom entities actively compete for spatial resources, forcing the BS to utilize its $N_{\mathrm{BS}}$ antennas to serve an expanded set comprising both the $K N_{\mathrm{U}}$ data streams of the actual users and $N_{\mathrm{BS}}$ virtual streams introduced by the channel estimation error. This competition partitions the finite spatial DoFs between the two groups, inevitably reducing the effective DoFs accessible for genuine data transmission. Consequently, the communication resources that would otherwise serve real users are diverted to virtual counterparts, directly constraining the system's MI regardless of specific channel realizations. This model explicitly quantifies how estimation uncertainty consumes spatial DoFs and {\blue reduces the MI}. The mechanism ensures every antenna deployment intrinsically balances MI against error vulnerability, transforming statistical imperfections into manageable spatial constraints while preserving the $\textcolor{black}{1/3}$ performance guarantee of the R-DCSPS scheme for the surrogate objective $c_{\mathrm{r}}(\mathcal{S})$.
	\label{remark:6}
\end{remark}
\subsection{{\textcolor{black}{R-DCSPS} Design Scheme under Imperfect CSI}}
The submodular optimization framework established for perfect CSI in Sec.~IV-C remains fully applicable under imperfect CSI conditions. The algorithm retains identical operational logic: Sequential position selection from discrete codebook $\mathcal{W}$ while enforcing minimum spacing constraints $|i-j| \geq N_{\mathrm{D}}$ to prevent electromagnetic coupling between MAs. The fundamental adaptation occurs in the incremental gain calculation, which now incorporates statistical channel uncertainty through a reformulated physical model.

During each iteration, the system evaluates candidate positions $v \in \mathcal{W} \setminus \mathcal{S}$ satisfying antenna spacing constraints. For each valid $v$, it computes the robust incremental gain $\Delta_r(v \mid \mathcal{S})$ as follows
\begin{equation}
	\Delta_{\mathrm{r}}(v\mid\mathcal{S}) \;=\;
		c_{\mathrm{r}}\left(\mathcal{S}\cup\{v\}\right)-c_{\mathrm{r}}\left(\mathcal{S}\right),
	\label{eq:delta_r}
\end{equation}
where the surrogate objective $c_{\mathrm{r}}(\mathcal{S})$ is defined in \eqref{33}.
The position maximizing this gain is then selected, triggering physical MA relocation via mechanical actuators. {\blue The R-DCSPS scheme follows the same finite control flow and stopping criterion $|\mathcal{S}|=N_{\mathrm{BS}}$ as \textbf{Algorithm~1}. It only replaces the incremental gain $\Delta(v\mid\mathcal{S})$ with the closed-form robust incremental gain $\Delta_{\mathrm{r}}(v\mid\mathcal{S})$ in \eqref{eq:delta_r}, so it requires neither runtime sampling nor a convergence tolerance. Since it performs the same candidate scan as the DCSPS scheme, the R-DCSPS scheme retains the polynomial complexity order of $\mathcal{O}(N_{\mathrm{S}}N_{\mathrm{BS}}^{4})$, while the wider virtual channel and the one-time factorization of the error covariance $\mathbf{\Psi}$ introduce only a constant-factor overhead.}

\section{Simulation Results}
Numerical results based on MC simulations are presented in this section to validate the effectiveness of the proposed algorithms. The multipath channel model is considered. Specifically, for the $k$-th user, the predefined codebook channel matrix in \eqref{eq:codebook channel} is generated by
\begin{equation}
	\bar{\mathbf{H}}_k=\sum_{l=1}^L\alpha_{l,k}\bar{\mathbf{f}}_{k}^{*}(\phi_{l,k})\bar{\mathbf{g}}_{k}^{T}(\theta_{l,k}),
\end{equation}
where
\begin{equation}
	\bar{\mathbf{f}}_{k}(\phi_{l,k})=\left[ 1, e^{\jmath\frac{2\pi A}{N_{\mathrm{S}}{\lambda}}\cos\phi_{l,k}},
	\ldots,\ 
	e^{\jmath\frac{2\pi A}{N_{\mathrm{S}}{\lambda}}(N_{\mathrm{S}}-1)\cos\phi_{l,k}} \right]^{T},
\end{equation}
and
\begin{equation}
	\bar{\mathbf{g}}_{k}(\theta_{l,k})=\left[ 1, e^{\jmath\pi\cos\theta_{l,k}},
	\ldots,\ 
	e^{\jmath\pi(N_{\mathrm{U}}-1)\cos\theta_{l,k}} \right]^{T}.
\end{equation}

The complex path gain $\alpha_{l,k}$ follows an independent and identically distributed $\mathcal{CN}(0, L_0D_k^{-\gamma})$ distribution, where $L_0=1$, $D_k$, and $\gamma=2.2$ denote the large-scale fading at reference distance $D_0=1\ \mathrm{m}$, distance from the BS to the $k$-th user, and the path loss exponent, respectively \cite{10437926}. In simulations, we assume that $\left\{D_k\right\}_{k=1}^K$ are uniformly distributed between $20\ \mathrm{m}$ to $100\ \mathrm{m}$. Meanwhile, a typical minimum distance constraint $N_{\mathrm{D}}$ in \eqref{P0} constraints can be set as
\begin{equation}
	{\blue N_{\mathrm{D}}=\left\lceil \frac{N_{\mathrm{S}}D}{A}\right\rceil,}
	\label{eq:ND_setting}
\end{equation}
where $D=\frac{\lambda}{2}$ \cite{10243545}, {\blue and the selection of this value shall be discussed in Sec. VI-A.} The carrier frequency is set to $5\text{ GHz}$, i.e., the wavelength is $\lambda=0.06\ \mathrm{m}$.
Here, $\left\lceil\cdot\right\rceil$ denotes the ceiling function, which yields the smallest integer greater than or equal to its argument. 
Unless otherwise stated, we set $K=2$, $L=20$, and the half-wavelength antenna spacing for each user. Both $\theta_{l,k}$ and $\phi_{l,k}$ are uniformly distributed in the range of $[0,\pi]$. All the results are obtained by averaging over 1,000 randomly generated channel realizations. 

Numerical results of system MI are presented in this section. To demonstrate the MI gains, the following schemes are adopted for comparison throughout this section:
\begin{enumerate}
	\item `BnB Scheme' \cite{cheng2025exploitingmovableantennasmulticast}: We implement a BnB method that attains the highest possible MI as the upper bound. This approach is typically computationally expensive in practical applications, particularly when the feasible region is large.
	\item `DCSPS/R-DCSPS': The proposed DCSPS/R-DCSPS schemes according to Secs.~IV and V.
	\item `Benchmark': The conventional fixed half-wavelength antenna spacing is considered at the BS. {\blue Since its antenna positions, namely its only design variable, are fixed prior to channel estimation and hence independent of the CSI, this scheme yields identical performance under perfect and imperfect CSI.} 
	\item {\blue `Random': An antenna placement is randomly generated from the sets that satisfy the minimum-spacing constraint.}
	\item {\blue `Equidistant': The antennas are uniformly distributed over the enlarged aperture and mapped to the nearest candidate grids.}
	\item {\blue `Channel-Gain (CG) Greedy' \cite{abuzgaia2026fas}}: {\blue At each selection round, the position with the largest squared row norm of the channel matrix is selected among those satisfying the minimum-spacing constraint.}
\end{enumerate}

\subsection{Selection of the Minimum Antenna Spacing}
\begin{figure}[t]
	\centering
	\includegraphics[width=3.05in,height=2.493in]{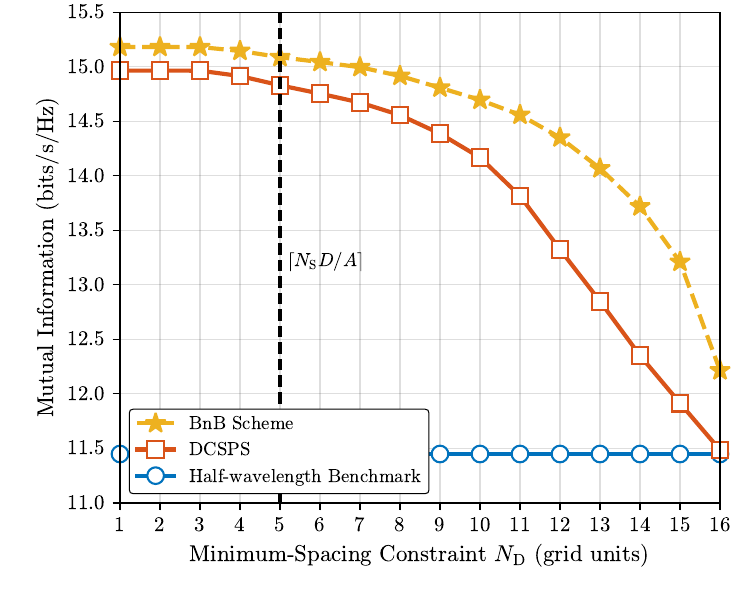}
	\caption{{\blue MI v.s. minimum-spacing constraint $N_{\mathrm{D}}$ under perfect CSI, with $A=2(N_{\mathrm{BS}}-1)\lambda$, $N_{\mathrm{BS}}=4$, $N_{\mathrm{U}}=2$, $K=2$, $N_{\mathrm{S}}=50$, and $\text{SNR}=30$~dB. The dashed vertical line marks the setting $\lceil N_{\mathrm{S}}D/A\rceil=5$ of \eqref{eq:ND_setting}.}}
	\label{fig:nd_sweep}
\end{figure}
{\blue To characterize the impact of the minimum-spacing constraint, we sweep $N_{\mathrm{D}}$ over its entire feasible range from $1$ to $N_{\mathrm{D}}^{\max}=\lfloor(N_{\mathrm{S}}-1)/(N_{\mathrm{BS}}-1)\rfloor=16$. Fig.~\ref{fig:nd_sweep} shows that both the BnB optimum and the MI of the DCSPS decrease monotonically as $N_{\mathrm{D}}$ increases. The monotonicity of the BnB optimum can be explained by the nested feasible sets. Specifically, let $\mathcal{F}(N_{\mathrm{D}})$ denote the family of cardinality-$N_{\mathrm{BS}}$ placements that satisfy the spacing constraint, and let $c^{\star}(N_{\mathrm{D}})=\max_{\mathcal{S}\in\mathcal{F}(N_{\mathrm{D}})}c(\mathcal{S})$ denote the corresponding optimum MI. Since $\mathcal{F}(N_{\mathrm{D}}+1)\subseteq\mathcal{F}(N_{\mathrm{D}})$, the optimum satisfies $c^{\star}(N_{\mathrm{D}}+1)\le c^{\star}(N_{\mathrm{D}})$. As a result, a larger $N_{\mathrm{D}}$ removes candidate placements and can only reduce the optimum MI, while the MI gain over the fixed half-wavelength benchmark gradually diminishes. For the DCSPS, increasing $N_{\mathrm{D}}$ contracts the feasible candidate set at every selection iteration, so the greedy search has less freedom to select positions that add complementary spatial dimensions. The attainable incremental MI gains are therefore reduced, which explains the decreasing MI trend observed in Fig.~\ref{fig:nd_sweep}.}

{\blue Although a smaller value of $N_{\mathrm{D}}$ leads to a higher MI, in this paper, we proposed to set $N_{\mathrm{D}}=\lceil N_{\mathrm{S}}D/A\rceil$, where $D=\lambda/2$. The half-wavelength spacing provides sufficiently low spatial correlation, allowing the selected antennas to exploit more spatial DoFs and achieve a higher MI. Therefore, the setting $D=\lambda/2$ adopted in \eqref{eq:ND_setting} is the typical half-wavelength choice widely used in the MIMO literature \cite{10906511,10243545}, and it gives $N_{\mathrm{D}}=5$ in this section unless otherwise specified. The dashed line therefore marks this physically motivated operating point rather than an MI-maximizing value or an exact transition point.}

\FloatBarrier

\subsection{Validation of Monotonicity and Submodularity}
\begin{figure}[t]
	\centering
	\includegraphics[width=3.05in,height=2.493in]{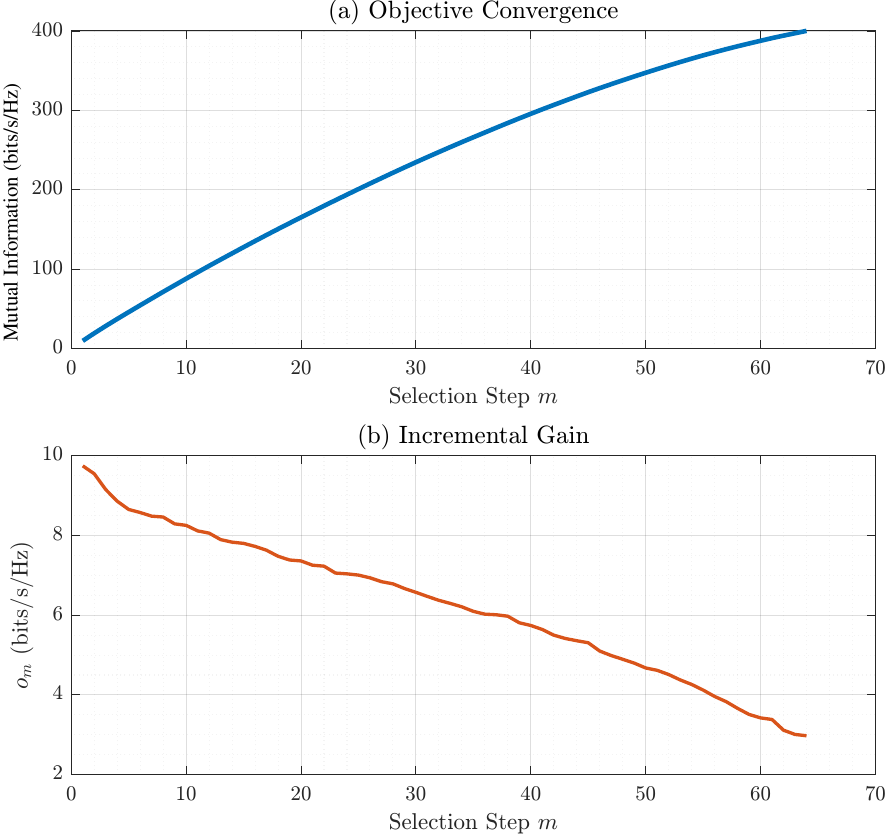}
	\caption{MI convergence and incremental MI gain during the greedy selection process under perfect CSI when $N_{\mathrm{BS}}=64$, $K = 8$, $N_{\mathrm{U}} = 8$, {$\text{SNR} = 30\ \text{dB}$}, and $N_{\mathrm{S}} = 100$.}
	\label{fig3}
\end{figure}

Fig. \ref{fig3} indicates that the MA-position design problem whose goal is to maximize the MI, is a monotone submodular optimization problem. In Fig. \ref{fig3}(a), the MI increases monotonically throughout the greedy selection process, confirming the monotonicity of the objective function. Meanwhile, Fig. \ref{fig3}(b) depicts the incremental MI gain obtained at each successive selection step, and its consistently declining curve exhibits a clear pattern of diminishing returns. Each new selection provides a diminishing incremental gain, empirically demonstrating the submodularity of the objective function.

\subsection{\texorpdfstring{MI}{MI} v.s. Transmit SNR}


\begin{figure}[t]
	\centering
	\includegraphics[width=3.1in,height=2.493in]{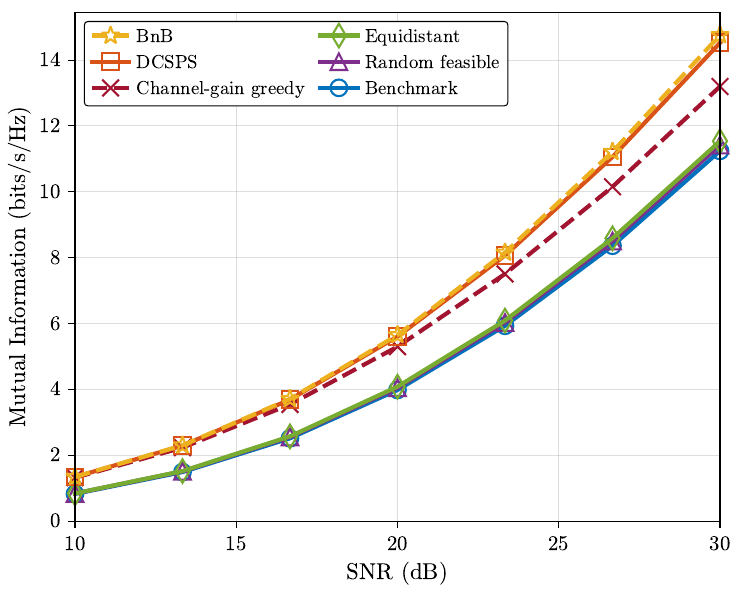}
	\caption{{\blue MI v.s. transmit SNR under the perfect CSI when $A=2(N_{\mathrm{BS}}-1)\lambda$ and $N_{\mathrm{S}}=30$.}}
	\label{fig:controlled_baselines_perfect}
\end{figure}

{\blue To isolate the contribution of the submodular MI criterion, we compare the DCSPS with the BnB, half-wavelength benchmark, CG greedy, random feasible placement, and equidistant placement schemes under perfect CSI. Fig.~\ref{fig:controlled_baselines_perfect} shows that the DCSPS consistently outperforms all low-complexity baselines and closely tracks the BnB upper bound. At $30$~dB, the DCSPS achieves $14.522$~bits/s/Hz, which is $3.280$~bits/s/Hz higher than the benchmark and $1.324$~bits/s/Hz higher than the CG greedy, while remaining only $0.237$~bits/s/Hz below the BnB optimum. Its superiority over the random feasible and equidistant placements shows that aperture enlargement and spacing feasibility alone are insufficient, while its advantage over the CG greedy baseline confirms that the additional gain arises from the submodular MI criterion rather than from implementation differences. These results demonstrate that the DCSPS achieves a substantial placement gain and near-optimal MI at polynomial complexity, thereby avoiding the combinatorial search required by the BnB.}

\begin{figure}[t]
	\centering
	\includegraphics[width=3.05in,height=2.493in]{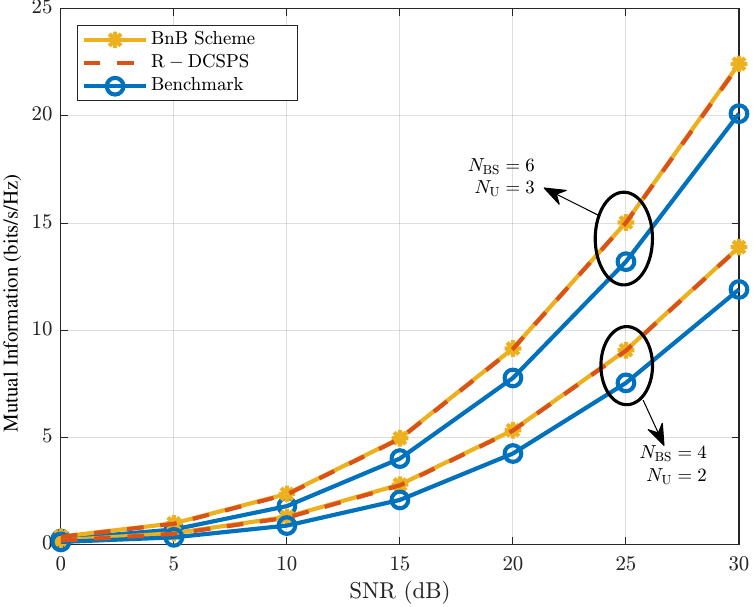}
	\caption{MI v.s. transmit SNR under imperfect CSI when $\text{NMSE}=-5\text{ dB}$, $A=2(N_{\mathrm{BS}}-1)\lambda$, and $N_{\mathrm{S}} = 30$.}
	\label{fig5}
\end{figure}

Fig. \ref{fig5} depicts the MI v.s. transmit SNR under imperfect CSI with an NMSE of $-5\text{ dB}$ in channel estimation. 
Compared to Fig. \ref{fig:controlled_baselines_perfect}, two key observations emerge: First, the MI gap of all schemes between perfect and imperfect CSI scenarios directly demonstrates the adverse impact of channel estimation errors on MA positioning. Nevertheless, the proposed \textcolor{black}{R-DCSPS} scheme achieves substantial MI gains over the fixed half-wavelength antenna array benchmark. This indicates that incorporating CSI uncertainty into the algorithm design effectively mitigates MI degradation due to estimation errors, preserving considerable MI gains in practical systems.

\begin{figure}[t]
	\centering
	\includegraphics[width=3.15in,height=2.493in]{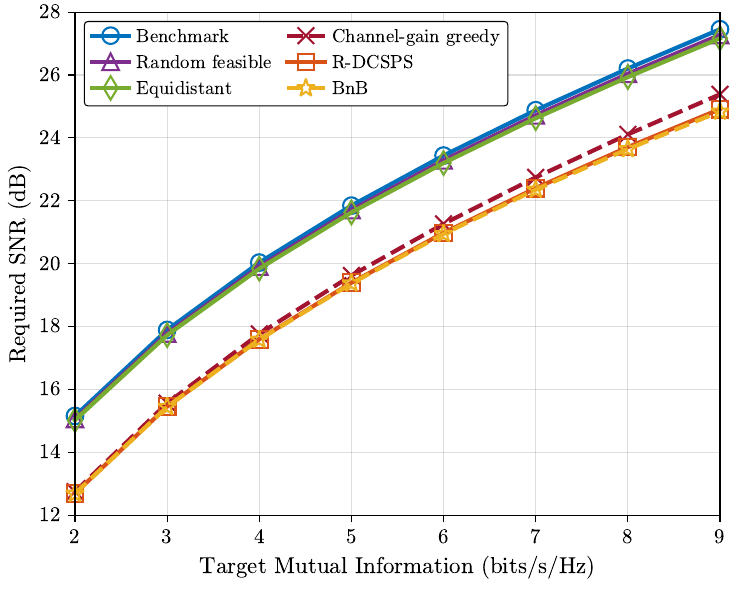}
	\caption{{\blue Required SNR v.s. target MI under the imperfect CSI when the NMSE is $-10$~dB, $A=2(N_{\mathrm{BS}}-1)\lambda$, and $N_{\mathrm{S}}=30$.}}
	\label{fig:required_snr_imperfect}
\end{figure}

{\blue Fig.~\ref{fig:required_snr_imperfect} quantifies the SNR required by each scheme to attain a target MI under imperfect CSI. At a target MI of $9$~bits/s/Hz, the R-DCSPS requires $24.913$~dB while saving $2.542$~dB, $2.363$~dB, $2.253$~dB, and $0.471$~dB relative to the benchmark, random feasible placement, equidistant placement, and CG greedy, respectively. The R-DCSPS and CG greedy require nearly the same SNR at the lowest target MI because the MI reduces to a channel-gain term to first order in the low-SNR regime. The SNR saving of the R-DCSPS over the CG greedy then increases from $0.054$~dB at $2$~bits/s/Hz to $0.471$~dB at $9$~bits/s/Hz, which shows that the benefit of the robust submodular MI criterion becomes more pronounced as the target MI increases. The BnB scheme requires only $0.008$~dB to $0.074$~dB less SNR than the R-DCSPS over the entire target range. The increasing SNR advantage over the CG greedy baseline, together with the small gap to the BnB, confirms the effectiveness of the R-DCSPS among the low-complexity schemes under imperfect CSI.}

\subsection{\texorpdfstring{{\blue Performance Gains with Practical Transceivers}}{Performance Gains in Practical Systems}}

{\blue Recall that, to facilitate a tractable AIV design problem, we assumed an isotropic transmission at the transmit side and an MMSE-SIC receiver to achieve the MI. To evaluate the performance gains of the proposed position design in practical systems, we consider both transmit covariance design at the users and receiver design at the BS. For the transmit-side study, we compare the DCSPS with the half-wavelength benchmark under isotropic transmission, uplink transmit zero-forcing (ZF), per-user singular value decomposition-based water-filling (SVD-WF) \cite[Ch.~7]{tse2005fundamentals}, and iterative water-filling (IWF) \cite{yu2004iterative}. For every non-isotropic DCSPS curve, the antenna positions and transmit covariance matrices are designed via AO, whereas the benchmark positions remain fixed and only the covariance matrices are updated. For the receiver-side study, we compare the DCSPS with the same benchmark under the MMSE-SIC, linear minimum mean-square error (LMMSE), and ZF receivers, all with the isotropic input.}

\begin{figure}[t]
	\centering
	\includegraphics[width=3.2in,height=2.493in]{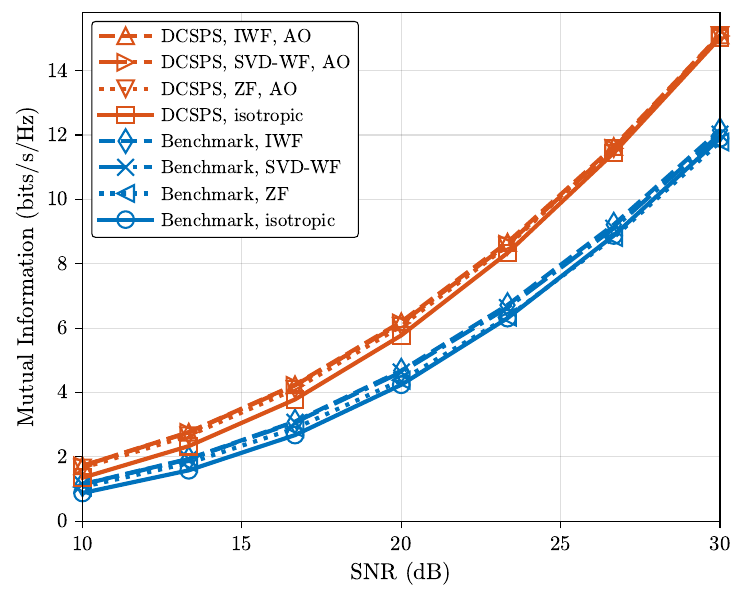}
	\caption{{\blue MI v.s. SNR under perfect CSI and non-isotropic transmissions with $N_{\mathrm{BS}}=4$, $N_{\mathrm{S}}=30$, $K=2$, $N_{\mathrm{U}}=2$, and $L=20$.}}
	\label{fig:covariance_design}
\end{figure}

{\blue Fig.~\ref{fig:covariance_design} shows that the MI achieved by the isotropic transmission and directional covariance designs is nearly identical. In addition, under the same covariance design, the DCSPS retains a significant MI gain over the corresponding half-wavelength benchmark. These observations indicate that the antenna positioning has a more profound impact on MI than covariance design and confirm the superiority of the proposed DCSPS when practical directional transmit covariance is incorporated. On the other hand, in Fig.~\ref{fig:receiver_design}, the proposed DCSPS consistently maintains a substantial sum-rate gain over the benchmark under each receiver. Quantitatively, at $30$~dB the DCSPS improves the benchmark by about $3.12$~bits/s/Hz under MMSE-SIC, $4.00$~bits/s/Hz under LMMSE, and $5.72$~bits/s/Hz under the ZF receiver. Relative to the corresponding benchmark rates, these gains amount to $26.3\%$, $42.9\%$, and $82.8\%$, respectively. This indicates that the proposed DCSPS achieves a significant improvement, which becomes even more prominent under practical receivers.}

\begin{figure}[t]
	\centering
	\includegraphics[width=3.05in,height=2.493in]{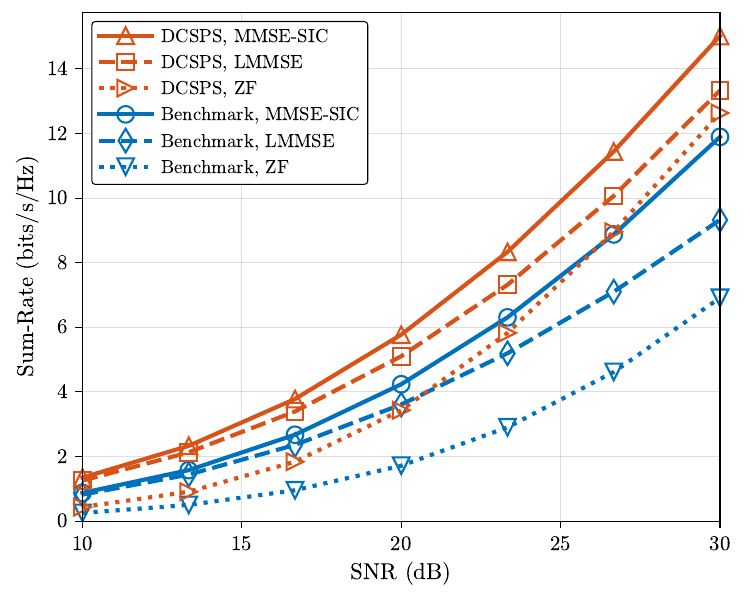}
	\caption{{\blue Sum-Rate v.s. SNR under perfect CSI and practical receivers, with $N_{\mathrm{BS}}=4$, $N_{\mathrm{S}}=30$, $K=2$, $N_{\mathrm{U}}=2$, and $L=20$.}}
	\label{fig:receiver_design}
\end{figure}

\subsection{\texorpdfstring{MI}{MI} v.s. Key System Parameters}

\begin{figure}[t]
	\centering
	\includegraphics[width=3.05in,height=2.493in]{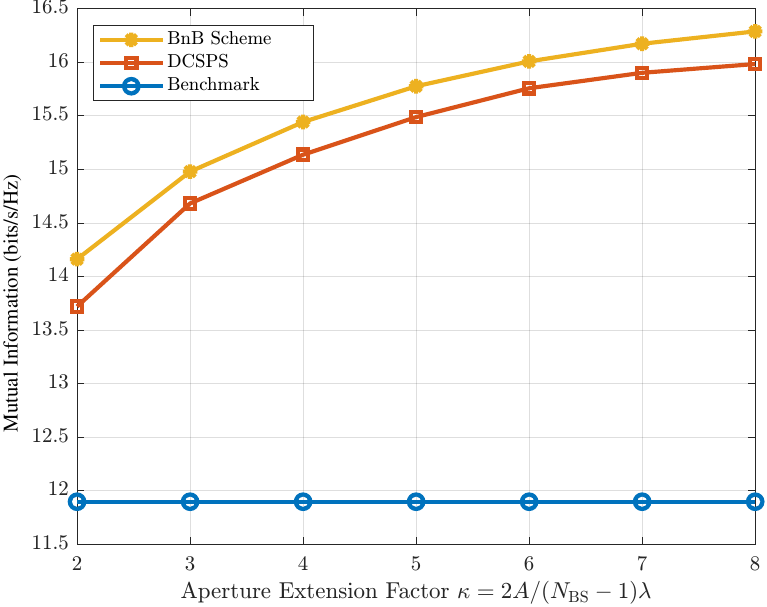}
	\caption{MI v.s. aperture under perfect CSI when $N_{\mathrm{BS}}=4$, $N_{\mathrm{U}}=2$, $\text{SNR} = 30\ \text{dB}$, and $N_{\mathrm{S}} = 50$.}
	\label{fig6}
\end{figure}

Fig. \ref{fig6} examines the impact of array aperture on the MI at $\text{SNR}=30\ \text{dB}$. The benchmark scheme, employing a fixed half-wavelength-spaced array, remains unchanged as its MI is aperture-independent. In contrast, the MI of both the BnB and proposed \textcolor{black}{DCSPS} schemes increases with the aperture, which expands the feasible region for antenna positioning and enhances channel reconfiguration. The proposed scheme achieves approximately 90\% of the BnB's MI at saturation. However, the growth rate of both schemes gradually declines, ultimately limited by the fixed maximum rank of the channel matrix, which is constrained by the number of antennas, despite the improved spatial characteristics provided by the movable antennas.

\begin{figure}[t]
	\centering
	\includegraphics[width=3.05in,height=2.493in]{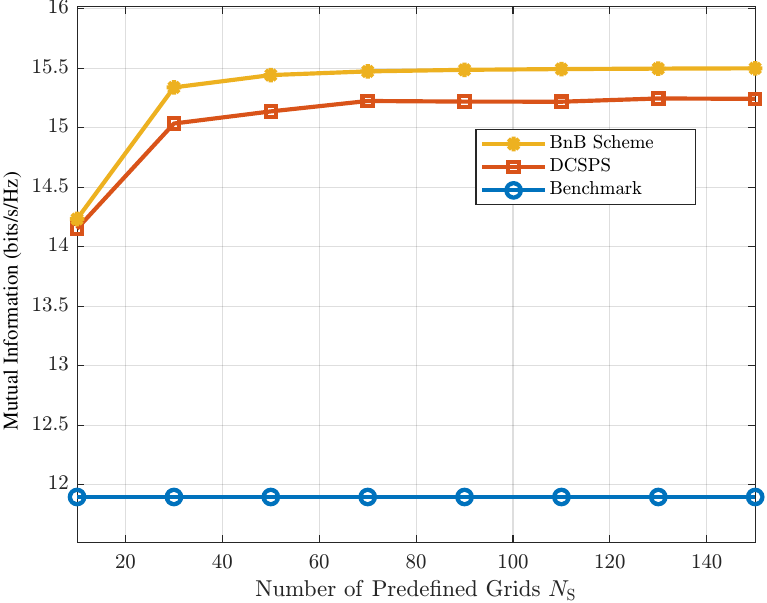}
	\caption{MI v.s. $N_{\mathrm{S}}$ under perfect CSI when $N_{\mathrm{BS}}=4$, $N_{\mathrm{U}}=2$, $\text{SNR} = 30\ \text{dB}$, and $A=2(N_{\mathrm{BS}}-1)\lambda$.}
	\label{fig7}
\end{figure}

Fig. \ref{fig7} examines the impact of the number of predefined grids on the MI under a fixed aperture, based on Fig. \ref{fig6}. While the benchmark scheme with fixed antennas remains unaffected by the grid number, both the BnB and proposed DCSPS algorithms achieve higher MI as the grid number increases, albeit with diminishing returns. A larger grid set expands the feasible antenna locations, enabling better solutions. However, for a fixed aperture, the MI upper bound remains that of the continuous MA system. Thus, increasing the number of candidate positions $N_{\mathrm{S}}$ yields progressively smaller MI gains. Jointly observing Fig. \ref{fig6} and Fig. \ref{fig7} indicates that aperture size influences the MA-system MI more significantly than the grid number.

\begin{figure}[t]
	\centering
	\includegraphics[width=3.05in,height=2.493in]{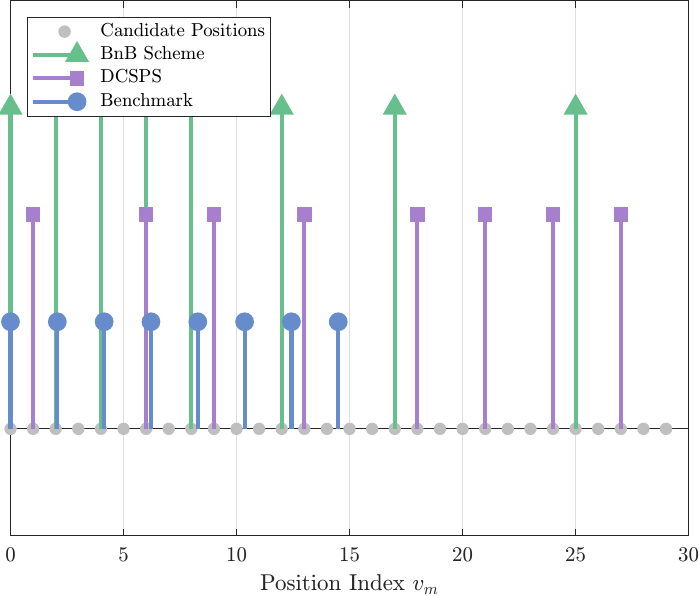}
	\caption{Antenna position design of different schemes under perfect CSI when $N_{\mathrm{BS}}=8$, $N_{\mathrm{U}}=4$, $A=(N_{\mathrm{BS}}-1)\lambda$, $N_{\mathrm{S}}=30$, and $\text{SNR} = 30\ \text{dB}$.}
	\label{fig9}
\end{figure}

Fig.~\ref{fig9} illustrates antenna positions selected by different schemes under a given channel realization. Both the BnB and proposed DCSPS schemes distribute elements across the entire aperture. Driven by its greedy selection of positions with the highest incremental gain, the DCSPS more actively explores underutilized regions, resulting in a more uniform spatial distribution than the BnB scheme. Since MA positioning enhances channel quality by projecting signal energy onto uncorrelated scattering dimensions, a larger effective aperture accesses richer spatial dimensions, offering greater channel enhancement potential.

\begin{figure}[t]
	\centering
	\makebox[\columnwidth][c]{\hspace*{-0.185in}\includegraphics[width=3.3in,height=2.493in]{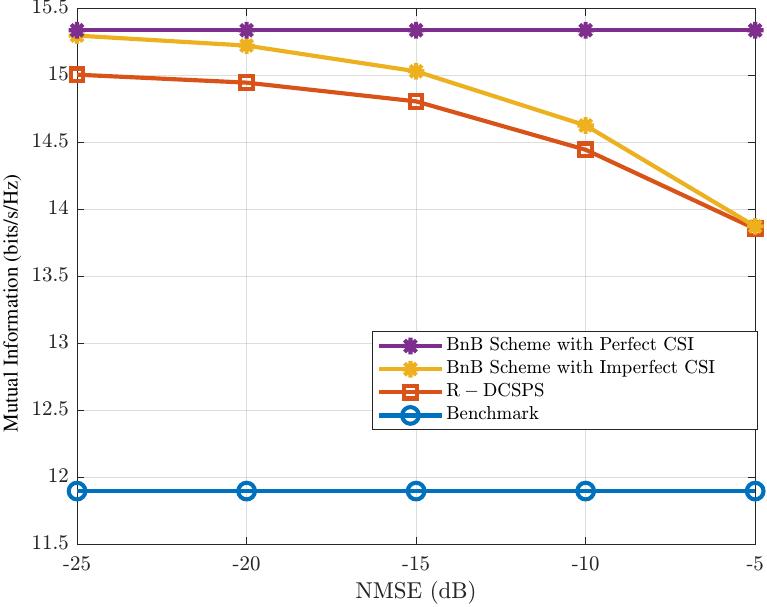}\hspace*{0.185in}}
	\caption{MI v.s. NMSE under imperfect CSI when $N_{\mathrm{BS}}=4$, $N_{\mathrm{U}}=2$, $A=2(N_{\mathrm{BS}}-1)\lambda$, $N_{\mathrm{S}}=30$, and $\text{SNR} = 30\ \text{dB}$.}
	\label{fig10}
\end{figure}


Fig. \ref{fig10} depicts the MI degradation v.s. NMSE. Bounded by the perfect-CSI BnB (upper) and fixed antenna (lower) benchmarks, the system MI declines as NMSE rises due to the increased resource cost of robustness. Despite this, the proposed scheme achieves significantly higher MI than the fixed benchmark, even under severe error conditions ($\text{NMSE}=-5\text{ dB}$). Furthermore, the proposed \textcolor{black}{R-DCSPS} approaches the MI of the imperfect-CSI BnB scheme as channel uncertainty grows, eventually matching it at $\text{NMSE}=-5\text{ dB}$, which underscores its robustness.

\subsection{Computational Complexity}
\begin{table}[t]
	\centering
	\caption{Complexity Comparisons of Different Algorithms.}
	\label{tab:complexity}
	\begin{threeparttable}
		\begin{tabular}{c|c|c|c}
			\hline
			\hline
			\makecell{ \\}  &\makecell{\bf{\textcolor{black}{DCSPS}}} &\makecell{\bf{BnB} } &\makecell{\bf{Brute-force}}  \\ \hline
			\bf{\bf{Runtime (ms)} }	           &  $5$       &$172$	& $297$          \\ \hline                           
			
			\hline
		\end{tabular}
	\end{threeparttable}
	\label{table1}
\end{table}

Table \ref{table1} compares the computational complexity of the proposed \textcolor{black}{DCSPS} algorithm, the brute-force optimal search, and a BnB method implemented in YALMIP with BMIBNB \cite{1393890}, with perfect CSI assumption. The results demonstrate that the \textcolor{black}{DCSPS} algorithm achieves significantly faster execution ($5\text{ ms}$) than both {\blue the }BnB scheme ($172\text{ ms}$, $34.4\times$ slower) and {\blue the }brute-force scheme ($297\text{ ms}$, $59.4\times$ slower). Hence, for the MI maximization MA design problem, the \textcolor{black}{DCSPS} approach can deliver a near-optimal solution at very low complexity, giving it a clear edge over alternative algorithms and making it highly attractive for practical deployment.

\begin{figure}[t]
	\centering
	\includegraphics[width=3.2in,height=2.493in]{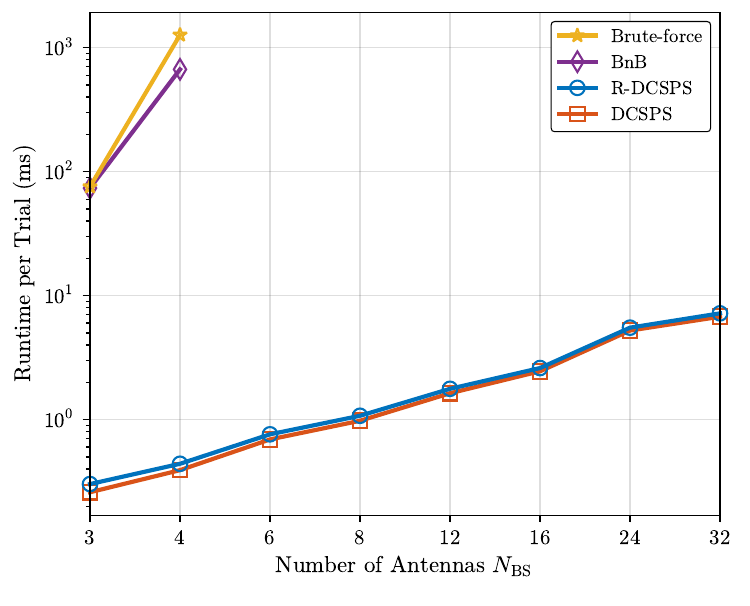}
	\caption{{\blue Runtime v.s. number of antennas $N_{\mathrm{BS}}$ when $N_{\mathrm{S}}=80$ and $K=2$, with $N_{\mathrm{U}}=2$, $L=20$, and a normalized SNR of $30$~dB.}}
	\label{fig:runtime_nbs}
\end{figure}

{\blue To evaluate the scalability of the proposed schemes, we compare the runtime of the DCSPS and R-DCSPS with the BnB and brute-force schemes as $N_{\mathrm{BS}}$ and $K$ increase. Fig.~\ref{fig:runtime_nbs} shows that the proposed DCSPS and R-DCSPS remain at the millisecond level up to $N_{\mathrm{BS}}=32$, whereas the BnB and brute-force schemes stay affordable only up to $N_{\mathrm{BS}}=4$ and become infeasible beyond it. The runtime of the proposed schemes increases with $N_{\mathrm{BS}}$, consistent with their complexity order $\mathcal{O}(N_{\mathrm{S}} N_{\mathrm{BS}}^{4})$, which is fourth order in $N_{\mathrm{BS}}$. A larger $N_{\mathrm{BS}}$ raises both the number of greedy steps and the size of matrix operations per evaluation. Even so, both proposed schemes retain a decisive complexity advantage and remain fully usable for large arrays.}

\begin{figure}[t]
	\centering
	\includegraphics[width=3.3in,height=2.493in]{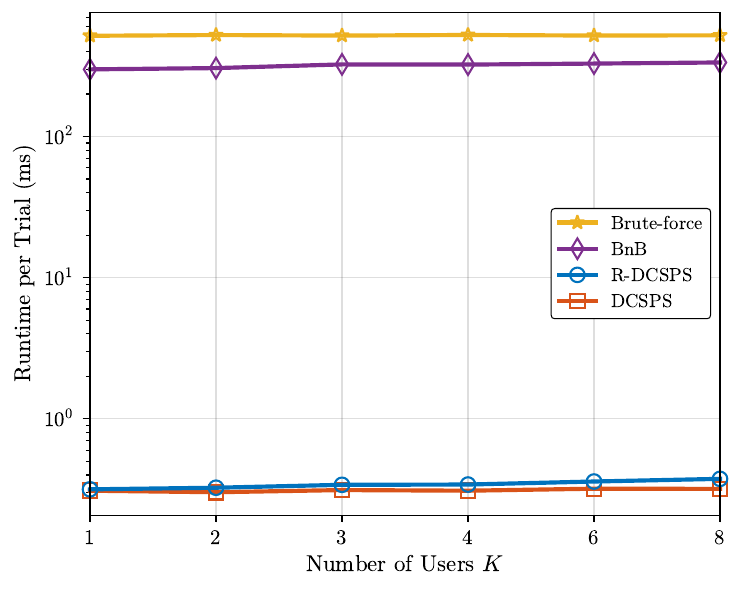}
	\caption{{\blue Runtime v.s. number of users $K$ when $N_{\mathrm{S}}=64$ and $N_{\mathrm{BS}}=4$, with $N_{\mathrm{U}}=2$, $L=20$, and a normalized SNR of $30$~dB.}}
	\label{fig:runtime_k}
\end{figure}

{\blue Fig.~\ref{fig:runtime_k} shows that the runtime stays almost flat and remains more than three orders of magnitude below the BnB and brute-force schemes, because $K$ does not change the problem dimension, i.e., the number of outer iterations, as the selection still places $N_{\mathrm{BS}}$ antennas over $N_{\mathrm{S}}$ grids. A larger $K$ only raises the cost of each objective evaluation, since the MI objective is a log-determinant whose matrix operations scale with $K N_{\mathrm{U}}$, so evaluating each candidate position becomes slightly more costly. {\blue The R-DCSPS evaluates the wider virtual channel in \eqref{34}, which appends $N_{\mathrm{BS}}$ columns representing the channel estimation error, so it stays slightly slower than the DCSPS over the whole range. Consequently, the DCSPS remains near $0.31\text{ ms}$, whereas the R-DCSPS increases only mildly from about $0.32\text{ ms}$ to about $0.38\text{ ms}$.} Overall, these results confirm that the proposed schemes retain a low polynomial complexity as $N_{\mathrm{BS}}$ and $K$ increase and stay practical for large-scale systems, whereas the BnB and brute-force schemes quickly become intractable.}

\section{Conclusion}
This paper established a novel submodular optimization framework for MA design in uplink MU-MIMO systems under different CSI assumptions. The proposed \textcolor{black}{DCSPS} algorithm achieved near-optimal MI with low complexity. Simulation results validated the effectiveness and practicality of the proposed scheme and led to several noteworthy design insights:
\begin{itemize}
	\item The MI maximization MA design problem has a monotone submodular objective, ensuring the \textcolor{black}{DCSPS} scheme optimally balances MI and complexity.
	\item Array aperture impacts MI more significantly than the number of predefined grids. Therefore, expanding aperture takes priority under limited hardware constraints.
	\item Estimation errors can be viewed as introducing virtual users that sequester communication resources originally intended for real users, thus depressing MI.
\end{itemize}

The proposed scheme provided an efficient MA design solution. {\blue Future work will first extend the proposed framework from the one-dimensional linear array to a two-dimensional planar array, over which the MAs move within a rectangular region. In this case, the MI objective remains monotone submodular, whereas the minimum spacing has to be imposed on the Euclidean distance between any two selected positions, which calls for a new characterization of the feasible position sets. Second, since the MI saturates as the array aperture increases, we will analytically characterize the MI as a function of the aperture, so that the smallest aperture attaining a prescribed fraction of the saturation MI can be determined.}

\bibliographystyle{IEEEtran}
\bibliography{IEEEabrv,IEEEreference}

\end{document}